\DeclareMathOperator*{\argmax}{argmax}
\newcommand{\dis}{\stackrel{d}{\sim}}
\newcommand{\eqla}{\stackrel{(a)}{=}}
\newcommand{\eqlb}{\stackrel{(b)}{=}}
\renewcommand{\mod}[0]{\text{ mod }}
\newtheorem{Thm}{Theorem}
\newtheorem{Lem}{Lemma}
\newtheorem{Cor}{Corollary}
\newtheorem{Def}{Definition}
\newtheorem{Prob}{Problem}
\newtheorem{Rem}{Remark}
\begin{document}

\setcounter{page}{1}
\title{Joint and Competitive Caching Designs in Large-Scale Multi-Tier Wireless Multicasting Networks}
\author{\authorblockN{Zitian Wang\thanks{Z. Wang, Z. Cao and Y. Cui are with the Department of  Electronic Engineering, Shanghai Jiao Tong University, China. Y. Yang is with the Intel Deutschland GmbH, Germany. This paper was submitted in part to IEEE GLOBECOM 2017.},\quad Zhehan Cao,\quad Ying Cui,\quad Yang Yang}}

\maketitle
\thispagestyle{headings}


\begin{abstract}
Caching and multicasting are two promising methods to support massive content delivery  in multi-tier wireless networks.
In this paper, we consider a  random caching  and multicasting scheme with caching distributions in the two tiers as design parameters, to achieve efficient content dissemination in a two-tier large-scale cache-enabled wireless multicasting network. First, we  derive  tractable expressions for  the successful transmission probabilities in the general region as well as the high SNR and high user density region, respectively, utilizing tools from stochastic geometry.
Then, for the case of a single operator for the two tiers,
we formulate  the optimal joint caching design problem to maximize the successful transmission probability in the asymptotic region, which is nonconvex in general. By using the block successive approximate optimization technique, we develop an iterative algorithm, which is shown to converge to a stationary point.
Next, for the case of two  different operators, one for each tier,
we formulate the competitive caching design game where each tier maximizes its successful transmission probability in the asymptotic region. We show that the game has a unique Nash equilibrium (NE) and develop an iterative algorithm, which is shown to converge to  the NE under a mild condition.
Finally, by numerical simulations, we show that the proposed designs achieve significant gains over existing schemes.
\end{abstract}

\begin{keywords}
Cache, multicast, multi-tier wireless network,
stochastic geometry, optimization,
game theory, Nash equilibrium
\end{keywords}
\newpage

\section{Introduction}

The rapid proliferation of smart mobile devices has triggered
an unprecedented growth of the global mobile data traffic.
Multi-tier wireless networks have been proposed as an effective way
to meet the dramatic traffic growth by deploying different tiers of point of attachments (POAs), e.g., base stations (BSs) or access points (APs) together, to provide better time or frequency reuse. In general, there are two scenarios, depending on whether different tiers are managed by the same operator. One typical example for the scenario of the same operator is deploying short range small-BSs together with traditional macro-BSs, i.e., heterogeneous wireless networks (HetNets). One typical example for the scenario of different operators is deploying IEEE 802.11 APs of different owners.
To further reduce the load of the core network, caching at POAs in multi-tier wireless networks is recognized as a promising  approach.

Caching in cache-enabled multi-tier wireless networks for the case of the same operator is considered in many works.
Cache-enabled multi-tier wireless networks with fixed topologies are considered in some of them.
For example, in \cite{Shanmugam13,LiTWC15,cui2016optimal}, the authors  consider the optimal content placement at small-BSs to minimize the expected downloading time for files at the macro-BS in a single macro-cell  with multiple small-cells.
Note that \cite{Shanmugam13,LiTWC15,cui2016optimal} do not capture  the stochastic natures of  channel fading  and geographic locations of  POAs and users, and the obtained results in \cite{Shanmugam13,LiTWC15,cui2016optimal} may not be applied to real networks. To address these limitations, large-scale cache-enabled multi-tier wireless networks are considered in some other works, using tools from stochastic geometry.
For example, 
in \cite{EURASIP15Debbah,LiuYangICC16,Yang16}, the authors consider caching the most popular files at each small-BS in large-scale cache-enabled small-cell networks or HetNets. 
In \cite{QuekTWC16}, the authors propose  a partition-based combined caching design in a large-scale cluster-centric small-cell network. 
In \cite{TamoorComLett16} and \cite{DBLP:journals/corr/Tamoor-ul-Hassan15}, the authors consider random caching of a uniform distribution at small-BSs in a  large-scale cache-enabled HetNet and a large-scale cache-enabled small-cell network, respectively. 
In \cite{zhang2016energy}, each macro-BS caches the most popular files and each small-BS randomly caches popular files in a large-scale cache-enabled HetNet.
Note that the focuses in \cite{EURASIP15Debbah,LiuYangICC16,Yang16,QuekTWC16,TamoorComLett16,DBLP:journals/corr/Tamoor-ul-Hassan15,zhang2016energy} are only on performance analysis of some simple caching designs, which may not provide  performance guarantee.
In \cite{chen2016probabilistic,wen2017random,li2016optimization,wen2016cache}, the authors consider random caching and
focus on the analysis  and optimization of the probability that the signal-to-interference plus noise ratio (SINR) of a typical user is above a threshold, in a large-scale cache-enabled multi-tier wireless network. 
In \cite{chen2016probabilistic}, the authors consider two architectures (i.e., an always-on architecture and a dynamic on-off architecture), and formulate the optimization problem for each architecture, which is convex. For each problem, the closed-form optimal solution is obtained. 
In \cite{wen2017random}, the authors consider two cooperative transmission schemes, and formulate the optimization problem under each scheme, which is nonconvex in the general case. For each problem, a stationary point is obtained using the standard gradient projection method. 
For \cite{li2016optimization,wen2016cache}, in a special case where all tiers have the same threshold, the optimization problem is convex and the optimal solution is obtained; in the general case, the problem is nonconvex. In~\cite{wen2016cache}, the nonconvex problem is simplified to a convex one  and the optimal solution to the simplified convex problem is used as a sub-optimal solution to the original nonconvex problem.
In \cite{serbetci2016optimal}, the authors propose a random caching design, and focus on the maximization of the cache hit probability. The optimization problem is  convex and the closed-form optimal solution is obtained.
Note that 
\cite{chen2016probabilistic,wen2017random,serbetci2016optimal,li2016optimization,wen2016cache} focus only on the typical user and  do not consider the resource sharing among multiple users.


Some works consider competitive caching among different POAs, using game theory.
For instance, in \cite{tan2016femtocaching}, the authors consider an Exact Potential Game among cache-enabled femto-BSs where each femto-BS maximizes the expected
number of its served users, prove the existence of Nash equilibrium (NE) and propose a convergent algorithm to obtain a NE. In \cite{kim2017ultra}, the authors consider a mean-field game among cache-enabled small-BSs where each small-BS minimizes its long run average cost, and obtain the unique mean field equilibrium.
For example, in \cite{li2016commercial,poularakis2014framework,shen2016stackelberg}, the authors consider Stackelberg games among content providers and network operators. Specifically, the content providers rent part of the network resources from the network operators for content delivery to get payment from  users. Note that in \cite{tan2016femtocaching,poularakis2014framework,shen2016stackelberg}, the authors consider cache-enabled wireless networks with fixed topologies. 
In \cite{kim2017ultra,li2016commercial}, large-scale cache-enabled wireless networks are considered;  in \cite{kim2017ultra}, a large-scale cache-enabled single-tier wireless network is considered; in \cite{li2016commercial}, the authors consider a large-scale cache-enabled multi-tier wireless network, but do not provide a convergent algorithm to find the Stackelberg equilibrium.


On the other hand, enabling multicast service at POAs in multi-tier wireless networks is  an efficient way to deliver popular contents to multiple requesters simultaneously by effectively utilizing the broadcast nature of the wireless medium.
In our previous work \cite{cui2017analysis}, we consider analysis and optimization of a hybrid caching design and a corresponding  multicasting design in a large-scale cache-enabled HetNet. 
The hybrid caching design requires  the files stored at macro-BSs and pico-BSs to be nonoverlapping and the files stored at all macro-BSs  to be identical. Thus, the spatial file diversity provided by the hybrid caching design is limited, which may cause network performance degradation at some system parameters. 
In our previous work \cite{cui2016analysis}, we consider analysis and optimization of a random caching design and a corresponding multicasting design  in a large-scale cache-enabled single-tier network. The proposed random caching design in \cite{cui2016analysis} can offer high spatial file diversity, ensuring good network performance over a wide range of  system parameters, but can not be directly applied to HetNets.

In summary,
further studies are required to facilitate the design of practical cache-enabled multi-tier wireless multicasting networks for massive content dissemination. 
In this paper, we consider a  random caching and multicasting design with caching distributions in the two tiers as the design parameters to provide high spatial file diversity. We derive  tractable expressions for  the successful transmission probabilities in the general region as well as the high SNR and high user density region (i.e., the asymptotic region), respectively, utilizing tools from stochastic geometry.
Our main contributions are summarized below.                                                                                                                                                                                                                                                                                                                                                       
\begin{itemize}
\item For the case of a single operator for the two tiers,
we formulate  the optimal joint caching design problem to maximize the successful transmission probability in the asymptotic region, which is a nonconvex problem in general. By using the block successive approximate optimization technique\cite{razaviyayn2013unified}, we develop  an iterative algorithm to obtain a stationary point. Specifically, by carefully choosing an approximation function, we obtain the closed-form optimal solution to the approximate optimization problem in each iteration.
In addition, in the special case of the same cache size, we develop a low-complexity algorithm to obtain a globally optimal solution by extending the method in\cite{wen2016cache}.
\item For the case of two  different operators, one for each tier,
we formulate the competitive caching design game where each tier maximizes its successful transmission probability in the asymptotic region. We show that the game has a unique NE and develop an iterative algorithm to obtain the NE. In general, it is quite difficult to guarantee that an iterative algorithm can converge to the NE of a game, especially for a large-scale wireless network.   
By carefully analyze structural properties of the competitive caching design game, we provide a convergence condition for the proposed iterative algorithm, which holds in most practical scenarios.
\item Finally, by numerical simulations, we show that the proposed designs achieve significant gains over  existing schemes in terms of successful transmission probability and complexity. We also show the caching probabilities of the proposed designs, revealing that the proposed designs offer high spatial file diversity. 
\end{itemize}

\section{System Model}\label{sec:netmodel}
\subsection{Network Model}

We consider a general large-scale two-tier downlink network consisting of two tiers of POAs, e.g., BSs or APs, as shown in Fig.~\ref{fig:system}. The two tiers can be managed by a single operator (e.g., HetNet with BSs being POAs) or by two different operators (e.g., IEEE 802.11 APs of two owners).\footnote{The network model considered in this paper is similar to that in\cite{cui2017analysis}. But here, we consider a random caching design which is more general and includes the hybrid caching design in \cite{cui2017analysis} as a special case. In addition, different from \cite{li2016optimization, wen2016cache}, we specify the random caching design by the caching probabilities of file combinations, so as to investigate the file load distribution and the impact of multicasting.} The locations of the POAs in tier $ 1 $ and tier $ 2 $ are spatially distributed as two independent homogeneous Poisson point processes (PPPs) $\Phi_{1}$ and $\Phi_{2}$ with densities $\lambda_{1}$ and $\lambda_{2}$, respectively. The locations of the users are also distributed as an independent homogeneous PPP $\Phi_{u}$ with density $\lambda_{u}$.
Each POA in the $j$th tier  has one transmit antenna with   transmission power $P_j$, where $j=1,2$. 
For notational convenience, we  define $\sigma_1\triangleq\frac{P_1}{P_2}$ and $\sigma_2\triangleq\frac{P_2}{P_1}$.
Each user has one receive antenna. All POAs are operating on the same frequency band with a bandwidth  $W$ (Hz). 
Consider a discrete-time system with time being slotted and study one slot of the network.
Both path loss and small-scale fading are considered: for path loss, a transmitted signal from either tier with distance $D$ is attenuated by a factor $D^{-\alpha}$, where $\alpha>2$ is the path loss exponent~\cite{li2016optimization,wen2016cache}; for small-scale fading, Rayleigh fading channels are adopted~\cite{WCOM13Andrews}.

Let $\mathcal N\triangleq \{1,2,\cdots, N\}$ denote the set of $N$ files  in the two-tier network.
For ease of illustration, assume that all  files  have the same size.
File popularity is assumed to be identical among all users.  Each user randomly  requests one file, which is file $n\in \mathcal N$ with probability $a_n\in (0,1)$, where $\sum_{n\in \mathcal N}a_n=1$.  Thus, the file popularity distribution is given by $\mathbf a\triangleq (a_n)_{n\in \mathcal N }$, which is assumed to be known  apriori\cite{li2016optimization, wen2016cache}.\footnote{Note that file popularity evolves at a slower timescale and learning methodologies can be employed to track the evolution
of file popularity over time.}
In addition, without loss of generality (w.l.o.g.),  assume  $a_{1}> a_{2}>\ldots> a_{N}$.
The  two-tier network   consists of  cache-enabled POAs. In the $j$th tier, each POA is equipped with a cache of size $K_j<N$ to store different popular files out of $N$. 

We say every $K_j$ different files form a combination. Thus, there are in total $I_j\triangleq \binom{N}{K_j}$ different combinations, each with $K_j$ different files.
Let $\mathcal I_j\triangleq \{1,2,\cdots, I_j\}$ denote the set of $I_j$ combinations. Combination $i\in \mathcal I_j$ can be characterized by an $N$-dimensional vector $\mathbf x_{j,i}\triangleq (x_{j,i,n})_{n\in \mathcal N}$, where $x_{j,i,n}=1$ if file $n$ is included in combination $i$ of tier $j$ and $x_{j,i,n}=0$ otherwise. Note that there are  $K_j$ 1's  in  each $\mathbf x_{j,i}$.
Denote $\mathcal N_{j,i}\triangleq \{n\in\mathcal N: x_{j,i,n}=1\}$ as the set of $K_j$ files contained in combination $i$ of tier $j$.

\begin{figure}[t]
\begin{center}
\includegraphics[width=14cm]{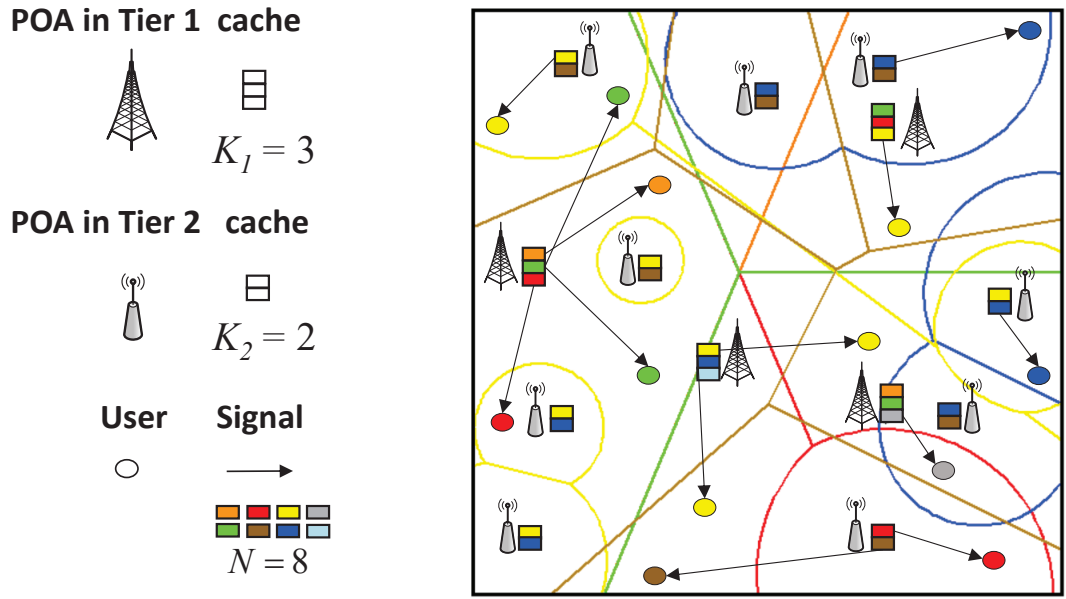}
\end{center}
\caption{\small{Network model.
Each file $n\in \mathcal N$ corresponds to a Voronoi tessellation (in the same color as the file),  determined  by the locations and transmission powers of all POAs storing this file.}}
\label{fig:system}
\end{figure}

\subsection{Caching}
To provide high spatial file diversity, we consider a random caching design in the cache-enabled two-tier network where the caching distributions in the two tiers may be different, as illustrated  in Fig.~\ref{fig:system}.
The probability that combination $ i\in\mathcal I_j $ is stored in each POA of tier $ j $ is $ p_{j,i} $, 
where $p_{j,i}$ satisfies
\begin{align}
0\leq p_{j,i}\leq1, \, i\in \mathcal I_j,\quad \sum_{i\in \mathcal I_j}p_{j,i}=1.\label{eqn:cache-constr-indiv}
\end{align}
A random caching design in the tier $j$ is specified by the caching distribution $\mathbf p_j\triangleq (p_{j,i})_{ i\in \mathcal I_j}$. Let $\mathcal I_{j,n} \triangleq \{i \in \mathcal I_j : x_{j,i,n}=1\}$ denote the set of $I_{j,n} \triangleq \binom{N-1}{K_j-1}$ combinations containing file $n$.
Let
\begin{align}
T_{j,n} \triangleq \sum_{i\in\mathcal {I}_{j,n}}p_{j,i}, \ n \in \mathcal N \label{eqn:def-T-n}
\end{align}
denote the probability that file $n$ is stored at a POA in the $j$th tier. Therefore, the random caching design in the large-scale cache-enabled two-tier network is fully specified by the design parameters $\left(\mathbf p_1,\mathbf p_2\right)$. 
In this paper, we focus on serving cached files at POAs to get first-order insights into the design of cache-enabled wireless networks, as in \cite{Andrews11,cui2016analysis,cui2017analysis,wen2016cache,li2016optimization}. POAs may serve uncached files through other service machanisms, the investigation of which is beyond the scope of this paper.
\begin{Rem}
Note that the random caching design considered in this paper is a generalization of the caching design where the most popular files are stored at each POA and the hybrid caching design proposed in \cite{cui2017analysis}. In particular, by choosing the design parameters $(\mathbf p_1, \mathbf p_2)$ such that $T_{j,n}=1$ for all $n=1,2,\cdots,K_j$ and $T_{j,n}=0$ for all $n=K_j+1,K_j+2,\cdots,N$, where $j=1,2$, the proposed random caching design turns to the design caching the most popular files\cite{QuekTWC16,DBLP:journals/corr/Tamoor-ul-Hassan15}. In addition, by choosing the design paprameters $(\mathbf p_1, \mathbf p_2)$ in a certain manner, the proposed random caching design can reflect identical caching in the $ 1 $st tier, random caching in the $ 2 $nd tier and nonoverlapping caching across the two tiers, and hence incorporate the hybrid caching design in\cite{cui2017analysis} as a special case. Therefore, by carefully designing $(\mathbf p_1, \mathbf p_2)$, the proposed random caching design can achieve better performance than the two designs. Later, we shall see the advantage of the proposed random design in Section~\ref{Sec:simu}.
\end{Rem}

\subsection{Multicasting}

Consider a user requesting file $n$. If file $n$ is not stored in any tier, the user will not be served. Otherwise adopt the following user association rule:
i) if file $n$ is stored only in the $ j $th tier, the user is associated with the nearest POA in the $j$th tier storing file $ n $; ii) if file $n$ is stored in both tiers, the user is associated with the POA which stores file $n$ and provides the maximum long-term average received power (RP) among all the POAs \cite{wen2016cache,li2016optimization}.


\begin{Rem}
Note that the content-centric user association considered in this paper is a generalization of the content-centric user association in \cite{cui2017analysis}. In particular, Case ii) is not included in \cite{cui2017analysis} due to the nonoverlapping caching constraint in~\cite{cui2017analysis}.
\end{Rem}

We consider multicasting
in the large-scale cache-enabled  two-tier network. Consider a POA serving requests for $k$ different files. Then, it transmits each of the $k$  files only once to concurrently serve  users requesting the same file, at a rate
$\tau $ (bit/second) and over $\frac{1}{k}$  of the total bandwidth $W$ using frequency division multiple access (FDMA). 
As a matter of fact, both multicast and unicast may happen (with different probabilities). Without loss of generality, as in \cite{cui2017analysis}, we refer to this transmission as multicast.
Note that, by avoiding transmitting  the same file multiple times to multiple users, this content-centric multicast
can  improve the efficiency of the utilization of the wireless medium and reduce the load of the wireless network, compared to the traditional connection-based unicast\cite{WCOM13Andrews}.
From the above illustration, we can see that
the proposed multicasting design is also affected by the proposed caching design. Therefore,
the design parameters  $\left(\mathbf p_1,\mathbf p_2\right)$ affect the performance of the random caching and multicasting design.

\subsection{Performance Metric}\label{Sec:perf}


In this paper, we study w.l.o.g. the performance of a typical user $u_0$, which is located at the origin.
Suppose $u_0$ requests file $n$. Let $j_0$ denote the index of the tier with which $u_0$ is associated, and let $\overline j_0$ denote the other tier.  Let $\ell_0\in\Phi_{j_0}$ denote the index of the serving POA of $u_0$.
We denote $D_{j,\ell,0}$ and $h_{j,\ell,0}\dis \mathcal{CN}\left(0,1\right)$ as the distance and the small-scale channel between POA $\ell\in\Phi_j$ and $u_{0}$, respectively.
We assume the complex additive white Gaussian noise of power $N_0$ (evaluated over the entire frequency band) at $u_0$.
For analytical tractability, as in \cite{cui2016analysis} and \cite{cui2017analysis}, we assume all POAs are active for serving their own users. This corresponds to the worst-case interference strength for the typical user.\footnote{The performance obtained under this assumption provides a lower bound on the performance of the practical network where some void POAs may be shut down.}
When $u_{0}$ requests file $n$ and file $n$ is transmitted by POA $\ell_0$, the  SINR of $u_{0}$ is given by\footnote{The bandwidth for serving $u_0$ is random, and affects the signal, interference and noise power experienced at $u_0$ in the same manner, i.e., linearly. Thus, we can use the signal and noise power over the whole frequency band in calculating ${\rm SINR}_{n,0}$\cite{cui2017analysis}.}
\begin{align}
{\rm SINR}_{n,0}=\frac{{D_{j_0,\ell_0,0}^{-\alpha_{j_0}}}\left|h_{j_0,\ell_0,0}\right|^{2}}{\sum_{\ell\in\Phi_{j_0}\backslash \ell_0}D_{j_0,\ell,0}^{-\alpha_{j_0}}\left|h_{j_0,\ell,0}\right|^{2}+\sum_{\ell\in\Phi_{\overline j_0}}D_{\overline j_0,\ell,0}^{-\alpha_{\overline j_0}}\left|h_{\overline j_0,\ell,0}\right|^{2}\frac{P_{\overline j_0}}{P_{j_0}}+\frac{N_{0}}{P_{j_0}}}, \ n\in\mathcal N.
\label{eqn:SINR}
\end{align}
Note that, as in\cite{QuekTWC16,cui2017analysis}, the transmitted symbols of file $n$ from POA $\ell_0$ are treated as the desired signal, while the transmitted symbols of file $n$ from other POAs are regarded as interference.\footnote{The received signals from all the POAs transmitting file $n$ may not be perfectly synchronized due to the large difference in distances from these POAs to $u_0$\cite{chae2015cooperative}.}
When $T_{j,n}>0$ (i.e., $u_0$ may be associated with tier $j$), let $K_{j,n,0}\in \{1,\cdots, K_j\}$ denote the number of different cached files requested by the users associated with POA $\ell_0\in\Phi_{j}$. 
Note that $K_{j,n,0}$ is a discrete random variable, whose probability mass function (p.m.f.) depends on $\mathbf a$, $\lambda_u$ and the design parameters $\left(\mathbf p_1,\mathbf p_2\right)$.

The file can be decoded correctly at $u_0$ if the channel capacity between BS $\ell_0$ and $u_0$ is greater than or equal to $\tau$. 
Requesters are mostly concerned about whether their desired files can be successfully received. Therefore, we adopt the probability that a randomly requested file by $u_0$ is successfully transmitted, referred to as the successful transmission probability,
as the network performance metric\cite{cui2017analysis}.
Let $A_{j,n}(\mathbf p_j,\mathbf p_{\overline j})$ denote the probability that $u_0$ requesting file $n$ is associated with tier $j$. By total probability theorem, the successful transmission probability under the considered scheme is
\begin{align}
q\left(\mathbf p_1,\mathbf p_2\right)=& \underbrace{ \sum_{n\in\mathcal N}a_n A_{1,n}\left(\mathbf p_1,\mathbf p_2\right){\rm Pr}\left[\frac{W}{K_{1,n,0}}\log_{2}\left(1+{\rm SINR}_{n,0}\right)\geq\tau \ \middle| \ \text{$j_0=1$}\right] }_{\triangleq q_1(\mathbf p_1, \mathbf p_2) } \nonumber\\
&+\underbrace{ \sum_{n\in\mathcal N}a_n A_{2,n}\left(\mathbf p_2,\mathbf p_1\right){\rm Pr}\left[\frac{W}{K_{2,n,0}}\log_{2}\left(1+{\rm SINR}_{n,0}\right)\geq\tau \ \middle| \ \text{$j_0=2$} \right] }_{\triangleq q_2(\mathbf p_1, \mathbf p_2) } \label{eqn:succ-prob-def}.
\end{align}
where $q_j(\mathbf p_j,\mathbf p_{\overline j})$ represents the probability that a randomly requested file by $u_0$ is successfully transmitted from a POA in tier $j$, also referred to as the successful transmission probability of tier $j$.\footnote{Note that the expression of the successful transmission probability in \eqref{eqn:succ-prob-def} is different from the performance metrics in \cite{cui2017analysis} and \cite{WCOM13Andrews}.}

\section{Performance Analysis}


In this section, we first analyze  the successful transmission probability in the general region. Then, we analyze the successful transmission probability in the asymptotic region. 

\subsection{Performance Analysis in General Region}

In this subsection, we  analyze the successful transmission probability in the general region (i.e., the general SNR and general user density region), using tools from stochastic geometry.
First, the user association probability $A_{j,n}(\mathbf p_j, \mathbf p_{\overline j})$ can be found in\cite{wen2016cache,li2016optimization} and is provided here for completeness:
\begin{align}
A_{j,n}(\mathbf p_{j},\mathbf p_{\overline j}) = \frac{\lambda_jT_{j,n}}{\lambda_jT_{j,n} + \lambda_{\overline j}T_{\overline j,n}\left(\frac{P_{\overline j}}{P_j}\right)^{ \frac{2}{\alpha} }} \triangleq A_{j,n}(T_{j,n},T_{\overline j,n}). \label{eqn:user-association-prob}
\end{align}

File load $K_{j,n,0}$ and SINR ${\rm SINR}_{n,0}$ are correlated in a complex manner in general, as POAs with larger association regions have higher file load and lower SINR (due to larger user to POA distances) \cite{AndrewsTWCOffloading14}.  For the tractability of the analysis, as in\cite{cui2016analysis,cui2017analysis,AndrewsTWCOffloading14}, the dependence is ignored, i.e.,
\begin{align}
&{\rm Pr}\left[\frac{W}{K_{j,n,0}}\log_{2}\left(1+{\rm SINR}_{n,0}\right)\geq\tau  \ \middle| \ j_0=j \right] \nonumber\\
\approx&\sum_{k=1}^{K_j}{\rm Pr}\left[K_{j,n,0}=k \ \middle| \ j=j_0\right]{\rm Pr}\left[{\rm SINR}_{n,0}\geq \left(2^{\frac{k\tau}{W}}-1\right) \  \middle| \ j_0=j\right],  \ j=1,2,\ n\in\mathcal N. \label{eqn:performance-given-n-j}
\end{align}
To obtain the conditional p.m.f. of $K_{j,n,0}$ given $j_0=j$ by generalizing the methods for calculating the p.m.f. of file load in\cite{cui2017analysis},
we need the probability density function (p.d.f.) of the size of the Voronoi cell of BS $\ell_{0}$ w.r.t. file $ m\in\mathcal{N}_{j,i,-n} $  when  $\ell_{0}$ contains combination $i\in \mathcal I_{j,n}$, where $\mathcal{N}_{j,i,-n}\triangleq \mathcal N_{j,i} \setminus \{n\}$.  However, since this p.d.f. is very complex and still unknown, we adopt the widely used approach in the existing literature\cite{AndrewsTWCOffloading14,WCOM13Andrews,cui2016analysis,cui2017analysis} and approximate this p.d.f. based on a tractable approximation of the  p.d.f. of the size of the Voronoi cell to which a randomly chosen user belongs\cite{SGcellsize13}. 
Under this approximation, the conditional p.m.f. of $K_{j,n,0}$ is given in the following lemma.
\begin{Lem} [Conditional p.m.f. of $K_{j,n,0}$] The conditional p.m.f. of $K_{j,n,0}$  given $j_0=j$ is given by
\begin{align}
\Pr \left[K_{j,n,0}=k\ |\ j_0=j\right]
\approx\sum_{i\in \mathcal I_{j,n}}\frac{p_{j,i}}{T_{j,n}}\sum_{ \mathcal X\in  \left\{\mathcal S \subseteq \mathcal N_{j,i,-n}: |\mathcal S|=k-1\right\} }\prod\limits_{m\in \mathcal X}\left(1-b_{j,m}\right)\prod\limits_{m\in {\mathcal N_{j,i,-n}\setminus \mathcal X}}b_{j,m},\label{eqn:K-pmf}
\end{align}
where $k=1,\cdots, K_{j}$, and\footnote{Note that $\widehat{A}_{j,m}\left(T_{j,m},T_{\overline j,m}\right) = \frac{A_{j,m}\left(T_{j,m},T_{\overline j,m}\right)}{T_{j,m}}$.}
\begin{align}
&b_{j,m}\triangleq\left(1+\frac{a_m\lambda_u\widehat{A}_{j,m}\left(T_{j,m},T_{\overline j,m}\right)}{3.5\lambda_{j}}\right)^{-3.5}\label{eqn:file-requ-prob},\\
&\widehat{A}_{j,m}(T_{j,m},T_{\overline j,m})\triangleq\frac{\lambda_j}{\lambda_jT_{j,m} + \lambda_{\overline j}T_{\overline j,m}\left(\frac{P_{\overline j}}{P_j}\right)^{ \frac{2}{\alpha} }}. 
\end{align}
\label{Lem:pmf-K}
\end{Lem}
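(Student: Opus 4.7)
The plan is to first condition on the combination $i$ that the serving POA $\ell_0$ stores. Given that the typical user $u_0$ requests file $n$ and is associated with tier $j$, the serving POA must store some combination containing $n$, so by Bayes' rule the conditional probability that $\ell_0$ stores combination $i$ equals $p_{j,i}/T_{j,n}$ for $i\in\mathcal I_{j,n}$. This immediately gives
\begin{align*}
\Pr[K_{j,n,0}=k\mid j_0=j] \;=\; \sum_{i\in\mathcal I_{j,n}}\frac{p_{j,i}}{T_{j,n}}\,\Pr[K_{j,n,0}=k\mid j_0=j,\ \ell_0\text{ stores }i],
\end{align*}
and reduces the problem to computing the inner conditional p.m.f.

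Next, fixing combination $i$, I would write $K_{j,n,0}=1+\sum_{m\in\mathcal N_{j,i,-n}}\mathbb 1[E_{j,m}]$, where $E_{j,m}$ is the event that at least one user requests file $m$ from $\ell_0$. A user requesting $m$ is served by $\ell_0$ iff it lies in the Voronoi cell of $\ell_0$ with respect to file $m$, i.e., the cell induced by the set of POAs (from both tiers) storing $m$ with RP weighting. Following the standard approximation used in \cite{cui2016analysis,cui2017analysis,AndrewsTWCOffloading14}, I would treat the events $\{E_{j,m}\}_{m\in\mathcal N_{j,i,-n}}$ as mutually independent; this yields the product-over-$\mathcal X$ structure in the statement, with $\mathcal X$ indexing which of the $K_j-1$ other files are requested.

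The core technical step is deriving $b_{j,m}=\Pr[E_{j,m}^c]$. Conditional on the area $S_{j,m}$ of $\ell_0$'s Voronoi cell w.r.t.\ file $m$, the count of users in that cell requesting $m$ is Poisson with mean $a_m\lambda_u S_{j,m}$ by independent thinning of $\Phi_u$, so
\begin{align*}
b_{j,m} \;=\; \mathbb E\bigl[\exp(-a_m\lambda_u S_{j,m})\bigr].
\end{align*}
The distribution of $S_{j,m}$ is intractable, so I would invoke the Yu--Kim Gamma approximation from \cite{SGcellsize13}: the size of the Voronoi cell to which a random point belongs is approximately $\mathrm{Gamma}(3.5,\mu_{j,m}/3.5)$, where $\mu_{j,m}=\mathbb E[S_{j,m}]$ equals the reciprocal of the effective density of tier-$j$ POAs that win file $m$ against inter-tier competition. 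That effective density is read off from the user-association probability \eqref{eqn:user-association-prob} as $\lambda_j/\widehat A_{j,m}(T_{j,m},T_{\overline j,m})$. Plugging the Gamma Laplace transform $\mathbb E[e^{-s S_{j,m}}]=(1+s\mu_{j,m}/3.5)^{-3.5}$ with $s=a_m\lambda_u$ yields exactly the displayed formula for $b_{j,m}$. Substituting back gives \eqref{eqn:K-pmf}.

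The main obstacle is justifying the two approximations used above. The cell $\ell_0$ presents to file $m$ is not literally the cell containing a uniformly random point of some PPP, but the cell of a particular POA conditioned on the user-association event; the Yu--Kim Gamma fit is known to be accurate in practice but is not a theorem here, so I would simply cite it rather than attempt a derivation. Likewise, the events $E_{j,m}$ for distinct $m$ share the same underlying POA pattern and therefore exhibit cell-size correlations, so their independence is only approximate; I would lean on the analogous file-load computation in \cite{cui2017analysis} to justify adopting the same approximation here. Everything else in the proof is bookkeeping: Bayes' rule on the combination, binary decomposition of the load into per-file indicators, and a Poisson-thinning plus Laplace-transform evaluation.
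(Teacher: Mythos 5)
Your proposal is correct and takes essentially the same route as the paper's proof: the same Bayes weighting $p_{j,i}/T_{j,n}$ over combinations containing $n$, the same decomposition $K_{j,n,0}=1+\sum_{m\in\mathcal N_{j,i,-n}}Y_{j,m,n,i}$ with the independence approximation on the per-file request indicators, and the same evaluation of $b_{j,m}$ via Poisson thinning of $\Phi_u$ combined with the Gamma cell-size approximation of \cite{SGcellsize13}. The only difference is one of exposition: you derive $b_{j,m}$ explicitly (correctly identifying the mean cell area as $\widehat A_{j,m}/\lambda_j$), whereas the paper delegates this step by citing Appendix~B of \cite{singh2013offloading} and Appendix~C of \cite{cui2016analysis}.
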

\begin{proof}
Please refer to Appendix A.
\end{proof}

\begin{figure*}[!t]
\footnotesize{\begin{align}
f_{j,k}(x,y)\triangleq &\, 2\pi\lambda_j\int_{0}^{\infty} d\exp\left(-\pi\lambda_j\left (\theta_{1,k}x+\theta_{2,j,k}y+\theta_{3,j,k}\right )d^2\right)\exp\left(- \left(2^{\frac{k\tau}{W}}-1\right) d^{\alpha}\frac{N_0}{P_j}\right){\rm d}d.\label{eqn:f-j-k}\\
\theta_{1,k}=&\frac{2}{\alpha}\left(2^{\frac{k\tau}{W}}-1\right)^{\frac{2}{\alpha}} \left(B'\left(\frac{2}{\alpha},1-\frac{2}{\alpha}, 2^{-\frac{k\tau}{W}} \right)-B\left(\frac{2}{\alpha},1-\frac{2}{\alpha}\right)\right) + 1.\label{eqn:c_1_k} \\
\theta_{2,j,k}=&\frac{2\lambda_{\overline j}}{\alpha\lambda_j}\left(\sigma_{\overline j}\left(2^{\frac{k\tau}{W}}-1\right) \right)^{\frac{2}{\alpha}}\left(B'\left(\frac{2}{\alpha},1-\frac{2}{\alpha}, 2^{-\frac{k\tau}{W}} \right)-B\left(\frac{2}{\alpha},1-\frac{2}{\alpha}\right)\right)+\frac{\lambda_{\overline j}}{\lambda_j}\sigma_{\overline j}^{\frac{2}{\alpha}}.\label{eqn:c_2_k}\\
\theta_{3,j,k}=&\frac{2}{\alpha}\left(2^{\frac{k\tau}{W}}-1\right) ^{\frac{2}{\alpha}}B\left(\frac{2}{\alpha},1-\frac{2}{\alpha}\right) +\frac{2\lambda_{\overline j}}{\alpha\lambda_j}\left(\sigma_{\overline j}\left(2^{\frac{k\tau}{W}}-1\right) \right)^{\frac{2}{\alpha}}B\left(\frac{2}{\alpha},1-\frac{2}{\alpha}\right).\label{eqn:c_3_k}
\end{align}}
\normalsize \hrulefill
\end{figure*}

\begin{Thm} [Performance]
The successful transmission probability  $q\left(\mathbf p_{1},\mathbf p_{2}\right)$ of $u_{0}$ is
\begin{align}
q\left(\mathbf p_{1},\mathbf p_{2}\right)=q_1\left(\mathbf p_{1},\mathbf p_{2}\right)+q_2\left(\mathbf p_{2},\mathbf p_{1}\right)\label{eqn:CPrate_multifile_noise},
\end{align}
where
\begin{align}
&q_j(\mathbf p_j,\mathbf p_{\overline j}) \nonumber\\
=&\sum_{n\in \mathcal N}a_{n} \sum_{k=1}^{K_j}\left( \sum_{i\in \mathcal I_{j,n}}p_{j,i}\sum_{ \mathcal X\in  \left\{\mathcal S \subseteq \mathcal N_{j,i,-n}: |\mathcal S|=k-1\right\} }\prod\limits_{m\in \mathcal X}\left(1-b_{j,m}\right)\prod\limits_{m\in {\mathcal N_{j,i,-n}\setminus \mathcal X}}b_{j,m} \right) f_{j,k}(T_{j,n},T_{\overline j,n}), 
\end{align}
$b_{j,m}$ is given by \eqref{eqn:file-requ-prob} and $f_{j,k}(T_{j,n},T_{\overline j,n})$ is given by \eqref{eqn:f-j-k} with $ \theta_{1,k} $, $\theta_{2,j,k} $ and $ \theta_{3,j,k} $ given by~\eqref{eqn:c_1_k}, \eqref{eqn:c_2_k} and \eqref{eqn:c_3_k}. Here, $B^{'}\left(x,y,z\right)\triangleq \int_{z}^{1}u^{x-1}\left(1-u\right)^{y-1}{\rm d}u$ and $B(x,y)\triangleq\int_{0}^{1}u^{x-1}\left(1-u\right)^{y-1}{\rm d}u$ denote the complementary incomplete Beta function and  the Beta function, respectively. \label{Thm:generalKmulti}
\end{Thm}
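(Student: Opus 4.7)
My approach starts from the decomposition $q(\mathbf{p}_1, \mathbf{p}_2) = q_1(\mathbf p_1, \mathbf p_2) + q_2(\mathbf p_2, \mathbf p_1)$ already laid out in \eqref{eqn:succ-prob-def}, so by symmetry it suffices to establish the claimed formula for one $q_j(\mathbf p_j, \mathbf p_{\overline j})$. Combining the approximation in \eqref{eqn:performance-given-n-j} with Lemma~\ref{Lem:pmf-K} reduces the task to computing, for each file $n \in \mathcal N$ and each $k \in \{1, \ldots, K_j\}$, the product $A_{j,n}(T_{j,n}, T_{\overline j,n}) \cdot \Pr[\mathrm{SINR}_{n,0} \geq 2^{k\tau/W}-1 \mid j_0=j]$, and showing it equals $T_{j,n}\, f_{j,k}(T_{j,n}, T_{\overline j,n})$. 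The $T_{j,n}$ factor is precisely what cancels the $T_{j,n}$ in the denominator of the conditional p.m.f.\ in Lemma~\ref{Lem:pmf-K}, leaving $\sum_{i \in \mathcal I_{j,n}} p_{j,i}$ in the final claim.

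The central technical ingredient is the conditional SINR coverage, which I would obtain with the standard stochastic-geometry toolbox. First condition on the serving distance $d = D_{j, \ell_0, 0}$. By the user-association rule and independent thinning of $\Phi_j$ into a ``caching'' process $\Phi_{j,n}$ of density $\lambda_j T_{j,n}$ and its complement of density $\lambda_j(1-T_{j,n})$, the joint ``density'' of the event $\{j_0 = j\}$ at distance $d$ equals
\[
2\pi \lambda_j T_{j,n}\, d\, \exp\!\left(-\pi\bigl(\lambda_j T_{j,n} + \lambda_{\overline j} T_{\overline j,n}\sigma_{\overline j}^{2/\alpha}\bigr) d^2\right),
\]
where the prefactor is the contact-distance density of $\Phi_{j,n}$ and the two exponential terms enforce ``no $\Phi_{j,n}$-point within distance $d$'' and ``no $\Phi_{\overline j,n}$-point within distance $\sigma_{\overline j}^{1/\alpha} d$''; the latter is exactly the maximum-RP condition $P_j d^{-\alpha} > P_{\overline j} r^{-\alpha}$. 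Integrating this density over $d$ reproduces $A_{j,n}(T_{j,n}, T_{\overline j,n})$, providing a sanity check consistent with \eqref{eqn:user-association-prob}.

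Given $d$, since $|h_{j, \ell_0, 0}|^2 \sim \mathrm{Exp}(1)$, the event $\{\mathrm{SINR}_{n,0} \geq s\}$ reduces to the Laplace-transform product $\exp(-s d^{\alpha} N_0/P_j)\cdot \mathcal L_I(s d^{\alpha})$. The aggregate interference $I$ splits into four independent PPP contributions, each handled by the PGFL: (i)~$\Phi_{j,n} \setminus \{\ell_0\}$ restricted outside the disk of radius $d$, (ii)~the non-caching tier-$j$ process on all of $\mathbb R^2$, (iii)~$\Phi_{\overline j, n}$ restricted outside the disk of radius $\sigma_{\overline j}^{1/\alpha} d$, and (iv)~the non-caching tier-$\overline j$ process on all of $\mathbb R^2$. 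The substitution $u = 1/(1+y)$ in each radial PGFL integral turns (ii) and (iv) into the complete Beta integral $B(2/\alpha, 1-2/\alpha)$, and turns (i) and (iii) into the complementary incomplete Beta integral $B'(2/\alpha, 1-2/\alpha, 2^{-k\tau/W})$, where the lower limit $2^{-k\tau/W}$ appears because $1/(1+s) = 2^{-k\tau/W}$ with $s = 2^{k\tau/W}-1$.

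Collecting all $d^2$-coefficients in the combined exponent and separating them into the $T_{j,n}$, $T_{\overline j,n}$, and constant pieces recovers precisely $\theta_{1,k}$, $\theta_{2,j,k}$, $\theta_{3,j,k}$ as defined in \eqref{eqn:c_1_k}--\eqref{eqn:c_3_k}: the ``$+1$'' in $\theta_{1,k}$ and the $\tfrac{\lambda_{\overline j}}{\lambda_j}\sigma_{\overline j}^{2/\alpha}$ term in $\theta_{2,j,k}$ come from the two exclusion-zone exponentials of the serving-distance density, while the Beta-function pieces come from the PGFL exponents. Integrating over $d$ then yields exactly $T_{j,n}\, f_{j,k}(T_{j,n}, T_{\overline j,n})$, and substituting back into \eqref{eqn:performance-given-n-j} with Lemma~\ref{Lem:pmf-K} produces the claimed $q_j$. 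The main obstacle is the bookkeeping: correctly splitting the interference into caching and non-caching sub-processes in each tier (the latter have no exclusion zone from the association rule and must be integrated from the origin), and verifying that these contributions aggregate precisely into the linear combination $\theta_{1,k} T_{j,n} + \theta_{2,j,k} T_{\overline j,n} + \theta_{3,j,k}$ rather than into a messier mixed expression.
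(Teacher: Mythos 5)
Your proposal is correct and follows essentially the same route the paper takes (the paper states Theorem~\ref{Thm:generalKmulti} without a written proof, but its intended derivation is exactly the one you give: total probability with the load--SINR independence approximation \eqref{eqn:performance-given-n-j}, the conditional p.m.f.\ from Lemma~\ref{Lem:pmf-K}, and a standard PGFL/stochastic-geometry computation of the conditional coverage under the content-centric max-RP association). Your bookkeeping checks out: the exclusion-zone void terms account for the ``$+1$'' in $\theta_{1,k}$ and the $\frac{\lambda_{\overline j}}{\lambda_j}\sigma_{\overline j}^{2/\alpha}$ term in $\theta_{2,j,k}$, the caching/non-caching split recombines as $T\,B' + (1-T)B = T(B'-B)+B$ to match \eqref{eqn:c_1_k}--\eqref{eqn:c_3_k}, and the factor $A_{j,n}\Pr[\,\cdot\,|j_0=j] = T_{j,n} f_{j,k}(T_{j,n},T_{\overline j,n})$ correctly cancels the $T_{j,n}$ in the denominator of \eqref{eqn:K-pmf}.
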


From Theorem~\ref{Thm:generalKmulti}, we can see that in the general region, the physical layer parameters $\alpha$, $W$, $\lambda_1$, $\lambda_2$, $\lambda_u$, $\frac{P_1}{N_0}$, $\frac{P_2}{N_0}$ and the design parameters $\left(\mathbf p_1,\mathbf p_2\right)$
jointly affect the successful transmission probability $q\left(\mathbf p_1,\mathbf p_2\right)$.
The impacts of the physical layer parameters and the design parameters on $q\left(\mathbf p_1,\mathbf p_2\right)$  are coupled in a complex manner.

\subsection{Performance Analysis in Asymptotic Region}

The gain of multicasting over unicasting increases with user density\cite{cui2017analysis}. In this subsection, to obtain design insights into caching and multicasting, we analyze the asymptotic successful transmission probability  in the high SNR and high user density region.
Note that in the rest of the paper, when considering the asymptotic region (i.e., the high SNR and user density region), we assume
$\frac{P_1}{N_0}\to \infty$ and $ \frac{P_2}{N_0}\to\infty $ while fixing the power ratio, i.e., $ \sigma_1 $ ($ \sigma_2 $).
In addition, in the high user density region where $\lambda_u\to \infty $, the discrete random variable $K_{j,n,0}\to K_j$ in distribution.
From Theorem~\ref{Thm:generalKmulti}, we can derive the successful transmission probability in the asymptotic region.

\begin{Cor}[Asymptotic Performance]
When $\frac{P}{N_0}\to \infty$ and $\lambda_u\to \infty$,
\begin{align}
q(\mathbf p_1,\mathbf p_2) = q_{1, \infty}(\mathbf T_1,\mathbf T_2 ) + q_{2, \infty}(\mathbf T_2,\mathbf T_1) \triangleq q(\mathbf T_1,\mathbf T_2), \label{eqn:f-k-infty-sym}
\end{align}
where
\begin{align}
q_{j, \infty}(\mathbf T_j,\mathbf T_{\overline j})=\sum_{n\in\mathcal N} \frac{a_nT_{j,n}}{ \theta_{1,K_j}T_{j,n}+\theta_{2,j,K_j}T_{\overline j,n}+\theta_{3,j,K_j} }. \label{eqn:f-j-k-infty-sym}
\end{align}
Here, $T_{j,n}$ is given by \eqref{eqn:def-T-n}, and $\theta_{1,k}$, $\theta_{2,j,k}$ and $\theta_{3,j,k}$ are given by \eqref{eqn:c_1_k}, \eqref{eqn:c_2_k} and \eqref{eqn:c_3_k}.
\label{Cor:asym-perf-v2}
\end{Cor}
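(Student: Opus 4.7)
The plan is to derive the corollary directly from Theorem~\ref{Thm:generalKmulti} by taking the two limits $\frac{P_j}{N_0}\to\infty$ and $\lambda_u\to\infty$ separately in the two structurally distinct ingredients of $q_j(\mathbf p_j,\mathbf p_{\overline j})$: the combinatorial weight on $f_{j,k}$, which depends on $\lambda_u$ only through $b_{j,m}$, and the integral $f_{j,k}(T_{j,n},T_{\overline j,n})$, which depends on $\frac{P_j}{N_0}$ only through the noise exponential. I would first isolate each limit, then recombine.

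For the high user density limit, I would examine $b_{j,m}$ in \eqref{eqn:file-requ-prob}. Since $\widehat{A}_{j,m}(T_{j,m},T_{\overline j,m})>0$ and $a_m>0$, the bracket in $b_{j,m}$ grows linearly in $\lambda_u$, so $b_{j,m}\to 0$ as $\lambda_u\to\infty$ and $1-b_{j,m}\to 1$. Now inside $q_j$ the inner double sum over $(i,\mathcal X)$ in Theorem~\ref{Thm:generalKmulti} can be viewed as the expectation of an indicator that $k-1$ specific files out of $\mathcal N_{j,i,-n}$ are requested and the remaining $K_j-k$ are not. Each term contains the factor $\prod_{m\in \mathcal N_{j,i,-n}\setminus \mathcal X}b_{j,m}$, which tends to $0$ unless $\mathcal N_{j,i,-n}\setminus \mathcal X=\emptyset$, i.e., $|\mathcal X|=K_j-1$, equivalently $k=K_j$. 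Thus in the limit only the $k=K_j$ summand survives, the remaining product equals $\prod_{m\in \mathcal N_{j,i,-n}}(1-b_{j,m})\to 1$, and the coefficient collapses via \eqref{eqn:def-T-n} to
\begin{align}
\sum_{i\in \mathcal I_{j,n}} p_{j,i}\;=\;T_{j,n}.
\end{align}

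For the high SNR limit, I would evaluate $\lim_{P_j/N_0\to\infty} f_{j,K_j}(T_{j,n},T_{\overline j,n})$ in \eqref{eqn:f-j-k}. The factor $\exp\!\bigl(-(2^{K_j\tau/W}-1)d^{\alpha}N_0/P_j\bigr)$ converges monotonically to $1$ on $(0,\infty)$, so by monotone (or dominated) convergence the limit integral becomes a Gaussian moment:
\begin{align}
2\pi\lambda_j\int_{0}^{\infty}\! d\,\exp\!\bigl(-\pi\lambda_j(\theta_{1,K_j}T_{j,n}+\theta_{2,j,K_j}T_{\overline j,n}+\theta_{3,j,K_j})d^{2}\bigr)\,{\rm d}d\;=\;\frac{1}{\theta_{1,K_j}T_{j,n}+\theta_{2,j,K_j}T_{\overline j,n}+\theta_{3,j,K_j}},
\end{align}
using $\int_0^\infty d\,e^{-ad^2}{\rm d}d=\frac{1}{2a}$. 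Multiplying by $a_n T_{j,n}$ and summing over $n$ yields \eqref{eqn:f-j-k-infty-sym}, and adding the analogous expression for the other tier gives \eqref{eqn:f-k-infty-sym}.

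I do not anticipate a serious obstacle; the argument is essentially mechanical. The one spot that deserves care is the joint limit, since $b_{j,m}$ and $f_{j,k}$ are controlled by different parameters. I would justify taking them in either order by bounding the integrand in \eqref{eqn:f-j-k} uniformly by a Gaussian that is integrable in $d$, and bounding the finite combinatorial sum by $1$, so that dominated convergence applies to both limits independently. Finally, I would remark that the approximation inherited from Lemma~\ref{Lem:pmf-K} and \eqref{eqn:performance-given-n-j} is what turns ``$=$'' in Theorem~\ref{Thm:generalKmulti} into the asymptotic identity claimed in the corollary.
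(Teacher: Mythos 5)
Your proposal is correct and follows essentially the same route as the paper's own proof in Appendix~B: letting the noise exponential in \eqref{eqn:f-j-k} tend to $1$ and evaluating the Gaussian integral via $\int_{0}^{\infty}d\exp(-cd^{2})\,{\rm d}d=\frac{1}{2c}$, combined with the high-density limit, which the paper states compactly as $K_{j,n,0}\to K_j$ in distribution and you unpack explicitly by showing $b_{j,m}\to 0$ so that only the $k=K_j$ term survives and the combinatorial coefficient collapses to $\sum_{i\in\mathcal I_{j,n}}p_{j,i}=T_{j,n}$. Your added dominated-convergence justification for interchanging the two limits is a detail the paper leaves implicit, but it does not change the argument.
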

\begin{proof}
Please refer to Appendix B.
\end{proof}

\begin{figure}[t]
\begin{center}
\includegraphics[width=12cm]{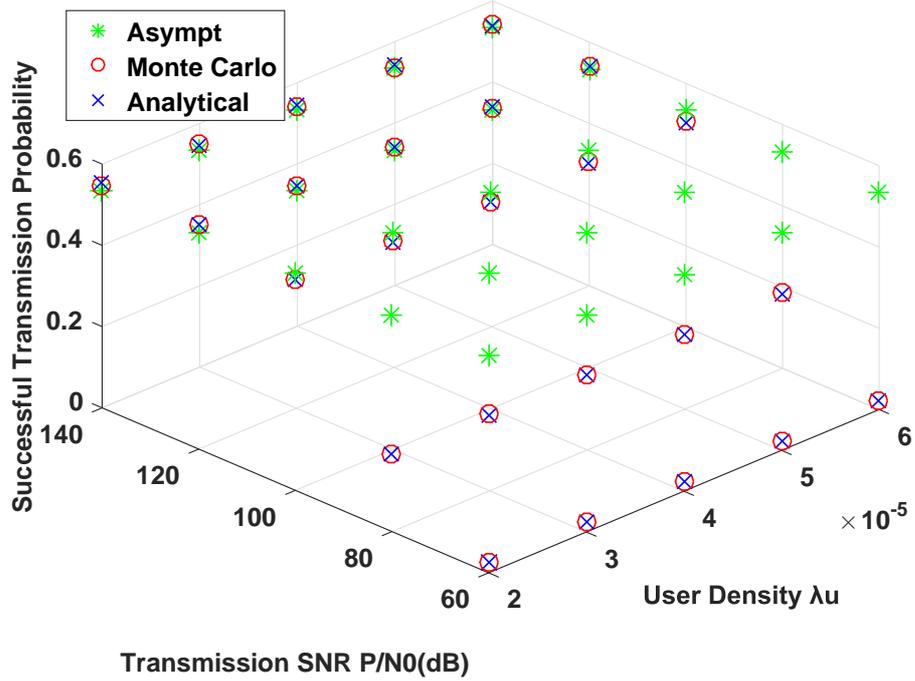}
\end{center}
\caption{\small{Successful transmission probability versus SNR $\frac{P}{N_0}$  and user density $\lambda_u$.  $N=10$, $K_1=3$, $K_2=2$, $p_{1,i} = \frac{1}{\binom{10}{3}}$ for all $i=1,2,\cdots,\binom{10}{3}$, $p_{2,i} = \frac{1}{\binom{10}{2}}$ for all $i=1,2,\cdots,\binom{10}{2}$, $\lambda_{1}=5\times10^{-7}$, $\lambda_{2}=3\times10^{-6}$, $P_1=10^{1.5}P$, $P_2=P$, $\alpha=4$, $W = 20\times 10^6$, $\tau =  35\times10^4$ and $a_n=\frac{n^{-\gamma}}{\sum_{n\in \mathcal N}n^{-\gamma}}$ with $\gamma =1$.}}
\label{fig:verification-Kmulti}
\end{figure}
Note that $ q_{j,\infty}(\mathbf T_j,\mathbf T_{\overline j}) = \lim_{\frac{P}{N_0}\to \infty , \lambda_u\to \infty}q_{j}(\mathbf p_j,\mathbf p_{\overline j})$ and $q_{\infty}(\mathbf T_1,\mathbf T_2) = \lim_{\frac{P}{N_0}\to \infty , \lambda_u\to \infty} q(\mathbf p_1,\mathbf p_2)$; when $\lambda_u \to \infty$ (corresponding to the full file load case), $q_j$ and $q$ become functions of $\mathbf T_1$ and $\mathbf T_2$ instead of $\mathbf p_1$ and $\mathbf p_2$. In addition, the asymptotic successful transmission probability in Corollary~\ref{Cor:asym-perf-v2} and the performance metric in\cite{li2016optimization,wen2016cache} have different meanings, although they share similar forms.
From Corollary~\ref{Cor:asym-perf-v2}, we can see that in the high SNR and high user density region, the impact of the physical layer parameters $\alpha$, $W$, $\lambda_j$ and $\sigma_j$, captured by $\theta_{1,j}$, $\theta_{2,j,K_j}$ and $\theta_{3,j,K_j}$, and the impact of the design parameters $\left(\mathbf p_1,\mathbf p_2\right)$ on $q_{\infty}\left(\mathbf T_1,\mathbf T_2\right)$ can be easily separated. In most practical cases, $ \theta_{1,K_1},\theta_{1,K_2}>0 $. Thus, we consider $ \theta_{1,K_1},\theta_{1,K_2}>0 $ in the rest of the paper.

Fig.~\ref{fig:verification-Kmulti}  verifies Theorem~\ref{Thm:generalKmulti} and Corollary~\ref{Cor:asym-perf-v2}, and demonstrates the accuracy of  the approximation adopted.
Fig.~\ref{fig:verification-Kmulti} also indicates that  $q_{\infty}\left(\mathbf T_1, \mathbf T_2\right)$  provides a simple and good approximation for   $q\left(\mathbf p_1, \mathbf p_2\right)$  in  the high SNR (e.g., $\frac{P}{N_0}\geq 120$ dB) and the high user density region (e.g., $\lambda_u\geq 3\times 10^{-5}$).

In the asymptotic region, from \cite{cui2017analysis}, we know that the constraints on $ \left(\mathbf p_1,\mathbf p_2\right) $ in \eqref{eqn:cache-constr-indiv} and \eqref{eqn:def-T-n} can be equivalently  rewritten as $ \left(\mathbf T_1,\mathbf T_2\right)\in \mathcal T_1\times\mathcal T_2 $, where $ \mathcal T_j $ is defined as
\begin{align}
\mathcal T_j \triangleq\left\{\mathbf T_j \ \middle|\ 0\leq T_{j,n}\leq 1,n\in\mathcal N, \sum\limits_{n\in\mathcal N}T_{j,n}=K_j \right\}. \label{eqn:strategy-set-j}
\end{align} 
To obtain design insights into caching in large-scale multi-tier wireless multicasting networks, in Section~\ref{Sec:joint-design} and Section~\ref{Sec:compet-design}, we focus on the joint and competitive caching designs in the asymptotic region, respectively.

\section{Joint Caching Design}\label{Sec:joint-design}

In this section, we consider the case that the two tiers of POAs are managed by a single operator, e.g., as in a HetNet. We first formulate the optimal joint caching design problem to maximize the successful transmission probability in the asymptotic region. Then, we develop an algorithm to obtain a stationary point.

\subsection{Optimization Problem Formulation}\label{Sec:opt-prob-formulate}

In this subsection, we formulate the optimal joint caching design problem to maximize the successful transmission probability $ q_{\infty}\left(\mathbf T_1,\mathbf T_2\right) $ by optimizing  the caching distributions  of the two tiers, i.e., $ \left(\mathbf T_1,\mathbf T_2\right) $.
\begin{Prob}[Joint Caching Design]\label{prob:opt-asymp-T}
	\begin{align}
	q_{\infty}^* \triangleq \max_{\mathbf{T}_1, \mathbf{T}_2} &\quad  q_{\infty}\left(\mathbf{T}_1, \mathbf{T}_2\right)\nonumber\\
	\text{s.t.} & \quad \mathbf T_j\in\mathcal T_j, \nonumber
	\end{align}
	where $ q_{\infty}\left(\mathbf{T}_1, \mathbf{T}_2\right) $ is given by \eqref{eqn:f-k-infty-sym} and $ \mathcal T_j $ is given by~\eqref{eqn:strategy-set-j}.
\end{Prob}

Problem~\ref{prob:opt-asymp-T} maximizes a differentiable (nonconcave in general) function over a convex set, and it is thus nonconvex in general. Note that Problem~\ref{prob:opt-asymp-T} and Problem 0 in \cite{wen2016cache} are mathematically equivalent, although this paper and \cite{wen2016cache} have different scopes. In the following subsection, we propose an efficient algorithm to solve Problem~\ref{prob:opt-asymp-T}. In contrast, \cite{wen2016cache} simplifies the nonconvex problem to a convex one, and uses the optimal solution to the simplified problem as a sub-optimal solution to the original problem, which may not provide performance guarantee.

\subsection{Algorithm Design}\label{Sec:new-opt}

Recall that Problem~\ref{prob:opt-asymp-T} is to maximize a differentiable (nonconcave in general) function over a convex set.
We can obtain a stationary point of Problem~\ref{prob:opt-asymp-T} using the  gradient projection method with a diminishing stepsize\cite[pp. 227]{bertsekas1999nonlinear}, as summarized in Algorithm~\ref{alg:local-asymp-sym} for completeness. 
In Algorithm~\ref{alg:local-asymp-sym}, the diminishing stepsize $ \epsilon(t) $ satisfies $ \epsilon(t)\to 0 $ as $ t\to\infty $, $ \sum\limits_{t=1}^{\infty} \epsilon(t) = \infty $ and $ \sum\limits_{t=1}^{\infty} \left (\epsilon(t)\right )^2 < \infty $. In addition, Step 3 is the projection of $ \bar T_{j,n}(t+1) $ onto set $ \mathcal T_j $. 
It is shown in \cite[pp. 229]{bertsekas1999nonlinear} that the sequence $ \left \{(\mathbf T_1(t),\mathbf T_2(t))\right \} $ generated by Algorithm~\ref{alg:local-asymp-sym} converges to a stationary point of Problem~\ref{prob:opt-asymp-T}. Note that a stationary point is a point that satisfies the necessary optimality conditions of a nonconvex optimization problem, and it is the classic goal in the design of iterative algorithms for nonconvex optimization problems.
However, the rate of convergence of Algorithm~\ref{alg:local-asymp-sym} is strongly dependent on the choices of stepsize $ \epsilon(t) $. If it is chosen improperly,  it may take a large number of iterations for Algorithm~\ref{alg:local-asymp-sym} to meet some convergence criterion.

\begin{algorithm}[t]
\caption{Stationary Point of Problem \ref{prob:opt-asymp-T} Based on the Standard Gradient Projection Method}
\footnotesize{\begin{algorithmic}[1]
\STATE Initialize  $t=1$ and choose any $ \mathbf T_j(1)\in\mathcal T_j $ (e.g., $T_{j,n}(1)=\frac{K_j}{N}$  for all $n\in \mathcal N$), $ j=1,2 $. 
\STATE   For all $n\in \mathcal N$, compute $\bar T_{j,n}(t+1)$ according to $\bar T_{j,n}(t+1)=T_{j,n}(t)+\epsilon(t)\frac{\partial q_{\infty}\left(\mathbf T_1(t),\mathbf T_2(t)\right)}{\partial T_{j,n}(t)}$.
\STATE For all $n\in \mathcal N$, compute $T_{j,n}(t+1)$ according to $T_{j,n}(t+1)=\min\left\{\left[\bar T_{j,n}(t+1)-\nu_j^*\right]^+,1\right\}$,
where $\nu_j^*$ satisfies $\sum_{n\in \mathcal N}\min\left\{\left[\bar T_{j,n}(t+1)-\nu_j^*\right]^+,1\right\}=K_j$.
\STATE Set $t=t+1$ and go to Step 2.
\end{algorithmic}}\label{alg:local-asymp-sym}
\end{algorithm}

To address the above problem, in this subsection we propose an iterative algorithm to obtain a stationary point of Problem~\ref{prob:opt-asymp-T} more efficiently.
This algorithm is  based on  the block successive upper-bound minimization  algorithm originally proposed in\cite{razaviyayn2013unified}.
It alternatively updates $ \mathbf T_1 $ and $ \mathbf T_2 $ by maximizing an approximate function of $ q_{\infty}(\mathbf T_1,\mathbf T_2) $, which is successively refined so that eventually the iterative algorithm can converge to a stationary point of Problem~\ref{prob:opt-asymp-T}.
Specifically, at iteration $ t $, we update the caching distribution of the $ j $th tier by maximizing the approximate function of $ q_{\infty}\left(\mathbf T_1,\mathbf T_2\right) $ given the caching distribution of the $ \overline j $th tier, and fix the caching distribution of the $ \overline j $th tier, where $ j=((t+1)\mod 2)+1 $.

For notational convenience, we define
\begin{align}
\tilde q_{\infty}\left (\mathbf T_j,\mathbf T_{\overline j}\right )\triangleq
\begin{cases}
q_{\infty}\left (\mathbf T_j,\mathbf T_{\overline j}\right ), \ j=1, \\
q_{\infty}\left (\mathbf T_{\overline j},\mathbf T_j\right ), \ j=2. 
\end{cases}
\end{align}
At iteration $t$,
choose $ g_j(\mathbf T_{j};\mathbf T_1(t),\mathbf T_2(t)) $ to be an approximate function of $ \tilde q_{\infty} (\mathbf T_j,\mathbf T_{\overline j}(t) ) $, where $ g_j(\mathbf T_{j};\mathbf T_1(t),\mathbf T_2(t)) $ is given by
\begin{align}
&g_{j}\left(\mathbf T_{j};\mathbf T_1(t),\mathbf T_2(t)\right) \nonumber\\
\triangleq\,&  q_{j,\infty}(\mathbf T_{j},\mathbf T_{\overline j}(t)) + q_{\overline j,\infty}(\mathbf T_{\overline j}(t),\mathbf T_{j}(t))
+ \sum\limits_{n\in\mathcal N}\frac{\partial q_{\overline j,\infty}\left(\mathbf T_{\overline j}(t),\mathbf T_j(t)\right)}{\partial T_{j,n}}\left(T_{j,n}-T_{j,n}(t)\right) \nonumber\\
=\,&q_{j,\infty}(\mathbf T_{j},\mathbf T_{\overline j}(t)) + q_{\overline j,\infty}(\mathbf T_{\overline j}(t),\mathbf T_{j}(t))
- \sum\limits_{n\in\mathcal N}\frac{a_n\theta_{2,\overline j,K_{\overline j}}T_{\overline j,n}(t) \left(T_{j,n}-T_{j,n}(t)\right)}{ \left(\theta_{1,K_{\overline j}}T_{\overline j,n}(t) + \theta_{2,\overline j,K_{\overline j}}T_{j,n}(t) + \theta_{3,\overline j,K_{\overline j}}\right)^2 }. \label{eqn:appr-g-def}
\end{align}
Note that the first concave component function of $ \tilde q_{\infty} (\mathbf T_j,\mathbf T_{\overline j}(t) ) $, i.e., $ q_{j,\infty}(\mathbf T_{j},\mathbf T_{\overline j}(t)) $ is left unchanged, and only the second nonconcave (actually convex) component function, i.e., $ q_{\overline j,\infty}(\mathbf T_{\overline j}(t),\mathbf T_{j}) $ is linearized at $ \mathbf T_{j}=\mathbf T_{j}(t) $.
This choice of the approximate function is beneficial from several aspects.
Firstly, it can guarantee the convergence of the algorithm to a stationary point of Problem~\ref{prob:opt-asymp-T}, which will be seen in Theorem~\ref{Thm:BSUM-cvg}.
Secondly, the partial concavity of the original objective function is preserved, and the resulting algorithm typically converges much faster than Algorithm~\ref{alg:local-asymp-sym}, where all component functions are linearized and no partial concavity is  exploited.
Thirdly, it yields a closed-form optimal solution to the optimization problem at each iteration, which will be explained in Lemma~\ref{Lem:BSUM-subprob-opt}.
Specifically, $ g_j(\mathbf T_{j};\mathbf T_1(t),\mathbf T_2(t)) $ is strictly concave on $ \mathcal T_j $ for any given $ \left(\mathbf T_1(t),\mathbf T_2(t)\right)\in\mathcal T_1\times\mathcal T_2 $, and satisfies\footnote{Note that \eqref{eqn:BSUM-tight} holds since $ g_j(\mathbf T_{j};\mathbf T_1(t),\mathbf T_2(t)) $ and $ \tilde q_{\infty}(\mathbf T_j,\mathbf T_{\overline j}(t)) $ have the same value at the point where $ g_j(\mathbf T_{j};\mathbf T_1(t),\mathbf T_2(t)) $  is defined, i.e., $ (\mathbf T_1,\mathbf T_2) = (\mathbf T_1(t),\mathbf T_2(t)) $. \eqref{eqn:BSUM-lower-bound} holds since $ q_{\overline j,\infty} (\mathbf T_{\overline j},\mathbf T_j ) $ is a convex function of $ \mathbf T_j $ for any given $ \mathbf T_{\overline j}\in\mathcal T_{\overline j} $.}
\begin{align}
g_j\left(\mathbf T_j(t);\mathbf T_1\left(t\right),\mathbf T_2\left(t\right)\right) 
&= \tilde q_{\infty}\left(\mathbf T_j\left(t\right),\mathbf T_{\overline j}\left(t\right)\right), \ \left(\mathbf T_1(t),\mathbf T_2(t)\right)\in\mathcal T_1\times\mathcal T_2, \label{eqn:BSUM-tight}\\
g_j\left(\mathbf T_j;\mathbf T_1\left(t\right),\mathbf T_2\left(t\right)\right) 
&\leq \tilde q_{\infty}\left(\mathbf T_j,\mathbf T_{\overline j}\left(t\right)\right),\ \mathbf T_j\in\mathcal T_j,\ \left(\mathbf T_1\left(t\right),\mathbf T_2\left(t\right)\right)\in\mathcal T_1\times\mathcal T_2,  \label{eqn:BSUM-lower-bound}
\end{align}
The conditions in \eqref{eqn:BSUM-tight} and \eqref{eqn:BSUM-lower-bound} imply that  $ g_j\left(\mathbf T_j;\mathbf T_1(t),\mathbf T_2(t)\right) $ is a tight lower-bound of $ \tilde q_{\infty}(\mathbf T_j,\mathbf T_{\overline j}(t)) $. 
The differentiability of $ g_j\left(\mathbf T_j;\mathbf T_1(t),\mathbf T_2(t)\right) $ guarantees that the first-order behavior of $ g_j\left(\mathbf T_j;\mathbf T_1(t),\mathbf T_2(t)\right) $ is the same as $ \tilde q_{\infty}(\mathbf T_j,\mathbf T_{\overline j}(t)) $ locally. 
At each iteration $ t $, we update the caching distribution of the $ j $th tier given the caching distribution of the $ \overline j $th tier by solving the following problem, and fix the caching distribution of the $ \overline j $th tier, where $ j=((t+1)\text{ mod }2)+1 $.
\begin{Prob} [Optimization at Iteration $ t $]
\label{prob:BSUM-subprob}
For tier $ j=((t+1)\text{ mod }2)+1 $, we have
\begin{align}
\mathbf T_j(t+1)=\argmax_{\mathbf T_j} &\quad g_j\left(\mathbf T_j;\mathbf T_1(t),\mathbf T_2(t)\right) \nonumber\\
\text{s.t.} &\quad \mathbf T_j\in\mathcal T_j, \nonumber
\end{align}
where $ g_j\left(\mathbf T_j;\mathbf T_1(t),\mathbf T_2(t)\right) $ is given by~\eqref{eqn:appr-g-def}.
\end{Prob}

Problem~\ref{prob:BSUM-subprob} is a convex optimization problem and Slater's condition is satisfied, implying that strong duality holds. Using KKT conditions, we can obtain the closed-form optimal solution to Problem~\ref{prob:BSUM-subprob}, as shown in the following lemma.
\begin{Lem} [Optimal Solution to Problem~\ref{prob:BSUM-subprob}]
For all $ j=((t+1)\text{ mod }2)+1 $, the optimal solution to Problem~\ref{prob:BSUM-subprob}  is given by
\begin{align}
&T_{j,n}(t+1)=\notag\\
&\min{\left\{\left[\frac{1}{\theta_{1,K_{j}}}\sqrt{\frac{a_n\left(\theta_{2,j,K_{j}}T_{\overline j,n}(t)+\theta_{3,j,K_{j}}\right)}{ \nu_j^*(t) + \frac{a_n\theta_{2,{\overline j},K_{\overline j}}T_{\overline j,n}(t)}{\left(\theta_{1,K_{\overline j}}T_{\overline j,n}(t) + \theta_{2,\overline j,K_{\overline j}}T_{j,n}(t) + \theta_{3,\overline j,K_{\overline j}}\right)^2} }} - \frac{\theta_{2,j,K_{j}}T_{\overline j,n}(t)+\theta_{3,j,K_{j}}}{\theta_{1,K_{j}}}\right]^+,1\right\}}, \ n\in\mathcal N, \notag
\end{align}
where $[x]^+\triangleq\max\{x,0\}$ and $\nu_j^*(t)$ is the Lagrange multiplier that satisfies 
\begin{align}
\sum_{n\in\mathcal N}T_{j,n}(t+1) = K_{j}.\notag
\end{align}
\label{Lem:BSUM-subprob-opt}
\end{Lem}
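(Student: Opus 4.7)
The plan is to solve Problem~\ref{prob:BSUM-subprob} via the KKT conditions, which are both necessary and sufficient here since $g_j(\mathbf T_j;\mathbf T_1(t),\mathbf T_2(t))$ is strictly concave in $\mathbf T_j$ on the convex compact set $\mathcal T_j$ and Slater's condition holds (take, e.g., $T_{j,n}=K_j/N$). A key observation to make upfront is that the linearized term in \eqref{eqn:appr-g-def} depends on $T_{j,n}$ only through the factor $(T_{j,n}-T_{j,n}(t))$; the denominator $\theta_{1,K_{\overline j}}T_{\overline j,n}(t)+\theta_{2,\overline j,K_{\overline j}}T_{j,n}(t)+\theta_{3,\overline j,K_{\overline j}}$ is a constant. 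Thus $g_j$ decomposes as a sum over $n\in\mathcal N$ in which $T_{j,n}$ appears only in a concave $q_{j,\infty}$ term and a linear term.

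First, I would form the Lagrangian with multiplier $\nu_j$ for the equality constraint $\sum_n T_{j,n}=K_j$, multipliers $\mu_n\ge 0$ for $T_{j,n}\le 1$, and multipliers $\eta_n\ge 0$ for $T_{j,n}\ge 0$. Differentiating the single-index summand of $q_{j,\infty}$ in \eqref{eqn:f-j-k-infty-sym} yields
\begin{align}
\frac{\partial q_{j,\infty}}{\partial T_{j,n}}=\frac{a_n\bigl(\theta_{2,j,K_j}T_{\overline j,n}(t)+\theta_{3,j,K_j}\bigr)}{\bigl(\theta_{1,K_j}T_{j,n}+\theta_{2,j,K_j}T_{\overline j,n}(t)+\theta_{3,j,K_j}\bigr)^{2}},\nonumber
\end{align}
so the stationarity condition becomes
\begin{align}
\frac{a_n\bigl(\theta_{2,j,K_j}T_{\overline j,n}(t)+\theta_{3,j,K_j}\bigr)}{\bigl(\theta_{1,K_j}T_{j,n}(t+1)+\theta_{2,j,K_j}T_{\overline j,n}(t)+\theta_{3,j,K_j}\bigr)^{2}}=\nu_j^{*}(t)+\frac{a_n\theta_{2,\overline j,K_{\overline j}}T_{\overline j,n}(t)}{\bigl(\theta_{1,K_{\overline j}}T_{\overline j,n}(t)+\theta_{2,\overline j,K_{\overline j}}T_{j,n}(t)+\theta_{3,\overline j,K_{\overline j}}\bigr)^{2}}+\mu_n-\eta_n.\nonumber
\end{align}
Solving for $T_{j,n}(t+1)$ in the interior case ($\mu_n=\eta_n=0$) gives the square-root expression in the statement, which is the unconstrained maximizer of the per-$n$ subproblem.

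Next I would handle the box constraints via complementary slackness: if the unconstrained optimum is negative, activating $\eta_n$ drives $T_{j,n}(t+1)=0$, which is encoded by $[\cdot]^{+}$; if it exceeds $1$, activating $\mu_n$ caps it at $1$, encoded by the outer $\min\{\cdot,1\}$. This yields exactly the water-filling-style formula in the lemma, parameterized by the single dual variable $\nu_j^{*}(t)$. Finally, $\nu_j^{*}(t)$ is determined by primal feasibility of the equality constraint, namely $\sum_{n\in\mathcal N}T_{j,n}(t+1)=K_j$; existence and uniqueness of such a $\nu_j^{*}(t)$ follow from the fact that each $T_{j,n}(t+1)$ is continuous and nonincreasing in $\nu_j$, tends to $1$ as $\nu_j\downarrow 0$ (assuming the square root stays large), and to $0$ as $\nu_j\to\infty$, so the left-hand sum is a continuous strictly decreasing function of $\nu_j$ on the active range.

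The only mildly subtle step is verifying that the min–max truncation is consistent with a single scalar dual $\nu_j^{*}(t)$ rather than requiring an explicit case analysis over which components are saturated. I would dispatch this by the standard monotonicity/intermediate-value argument just described, which ensures that the scalar root-finding on $\nu_j$ suffices to recover the full primal solution; no separate proof per saturation pattern is needed.
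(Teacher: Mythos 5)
Your proposal is correct and takes essentially the same route as the paper, which proves Lemma~\ref{Lem:BSUM-subprob-opt} simply by observing that Problem~\ref{prob:BSUM-subprob} is convex (strictly concave objective over $\mathcal T_j$) with Slater's condition satisfied, and then applying the KKT conditions: your stationarity equation, the $[\cdot]^+$ and $\min\{\cdot,1\}$ truncations via complementary slackness, and the determination of the single scalar $\nu_j^*(t)$ from $\sum_{n\in\mathcal N}T_{j,n}(t+1)=K_j$ (computed by bisection) are exactly the paper's derivation. Your closing monotonicity/intermediate-value argument for the existence and recovery of $\nu_j^*(t)$ is a sound elaboration of a step the paper leaves implicit.
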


Note that $ \nu_j^*(t) $ can be efficiently obtained by using bisection search.
\begin{algorithm}[t]
\caption{Stationary Point of Problem \ref{prob:opt-asymp-T} Based on BSUM}
\footnotesize{\begin{algorithmic}[1]
\STATE Initialize  $t=1$ and choose any $ \mathbf T_j(1)\in\mathcal T_j $ (e.g., $T_{j,n}(1)=\frac{K_j}{N}$  for all $n\in \mathcal N$), $ j=1,2 $. 
\STATE  Compute $ j=((t+1) \text{ mod } 2)+1 $.
\STATE For all $ n\in\mathcal N $, compute $ T_{j,n}(t+1) $ according to Lemma~\ref{Lem:BSUM-subprob-opt}. 
\STATE For all $ n\in\mathcal N $, set $ T_{{\overline j},n}(t+1)=T_{{\overline j},n}(t) $.
\STATE Set $t=t+1$ and go to Step 2.
\end{algorithmic}}\label{alg:BSUM}
\end{algorithm}
The details of the proposed iterative algorithm are summarized in Algorithm~\ref{alg:BSUM}.
Based on the conditions in \eqref{eqn:BSUM-tight} and \eqref{eqn:BSUM-lower-bound}, 
we show the convergence and optimality of Algorithm~\ref{alg:BSUM}. 
\begin{Thm} [Convergence and Optimality of Algorithm~\ref{alg:BSUM}]
\label{Thm:BSUM-cvg}
The sequence $ \left \{q_{\infty}(\mathbf T_1(t),\mathbf T_2(t))\right \} $ generated by Algorithm~\ref{alg:BSUM} is convergent, and every limit point of $ \left \{(\mathbf T_1(t),\mathbf T_2(t))\right \} $ is a stationary point of Problem~\ref{prob:opt-asymp-T}.
\end{Thm}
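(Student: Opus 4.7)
The plan is to cast Algorithm~\ref{alg:BSUM} as an instance of the block successive upper-bound minimization (BSUM) framework of \cite{razaviyayn2013unified} and then invoke (a sign-flipped version of) its convergence result. Since the paper uses the framework for a maximization problem, I shall phrase everything in terms of tight lower-bound surrogate maximization.

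First I would verify the four regularity conditions that the surrogate $g_j(\mathbf T_j;\mathbf T_1(t),\mathbf T_2(t))$ is required to satisfy at every iterate $(\mathbf T_1(t),\mathbf T_2(t))\in\mathcal T_1\times\mathcal T_2$. Tightness at the current point is \eqref{eqn:BSUM-tight}, and the global lower-bound property is \eqref{eqn:BSUM-lower-bound}; both are already established in the text from the fact that $q_{\overline j,\infty}(\mathbf T_{\overline j}(t),\cdot)$ is convex in its second argument (a sum of ratios of the form $T_{j,n}/(\alpha T_{j,n}+\beta)$, which is concave in $T_{j,n}$ when viewed from the $j$-tier side but convex when viewed from the $\overline j$-tier side, as needed). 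Two further conditions must then be checked: (i) first-order consistency, $\nabla_{\mathbf T_j}g_j(\mathbf T_j(t);\mathbf T_1(t),\mathbf T_2(t))=\nabla_{\mathbf T_j}\tilde q_\infty(\mathbf T_j(t),\mathbf T_{\overline j}(t))$, which follows by direct differentiation of \eqref{eqn:appr-g-def}, since the linearization of $q_{\overline j,\infty}$ is a first-order Taylor expansion at $\mathbf T_j=\mathbf T_j(t)$; and (ii) joint continuity of $g_j$ in $(\mathbf T_j,\mathbf T_1(t),\mathbf T_2(t))$, which is immediate from \eqref{eqn:appr-g-def} because all denominators are bounded away from zero on $\mathcal T_1\times\mathcal T_2$ (as $\theta_{3,\overline j,K_{\overline j}}>0$).

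Next I would establish monotonicity and convergence of the objective sequence. By the definition of $\mathbf T_j(t+1)$ in Problem~\ref{prob:BSUM-subprob},
\begin{align*}
g_j(\mathbf T_j(t+1);\mathbf T_1(t),\mathbf T_2(t))\;\geq\;g_j(\mathbf T_j(t);\mathbf T_1(t),\mathbf T_2(t)).
\end{align*}
Combining with the lower-bound property \eqref{eqn:BSUM-lower-bound} and the tightness property \eqref{eqn:BSUM-tight} yields
\begin{align*}
q_\infty(\mathbf T_1(t+1),\mathbf T_2(t+1))\;\geq\;g_j(\mathbf T_j(t+1);\mathbf T_1(t),\mathbf T_2(t))\;\geq\;g_j(\mathbf T_j(t);\mathbf T_1(t),\mathbf T_2(t))\;=\;q_\infty(\mathbf T_1(t),\mathbf T_2(t)),
\end{align*}
so $\{q_\infty(\mathbf T_1(t),\mathbf T_2(t))\}$ is nondecreasing. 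Since $q_\infty$ is a sum of terms bounded by $a_n$ on the compact set $\mathcal T_1\times\mathcal T_2$, the sequence is bounded above, hence convergent. This gives the first claim.

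Finally I would prove that every limit point $(\mathbf T_1^\star,\mathbf T_2^\star)$ of $\{(\mathbf T_1(t),\mathbf T_2(t))\}$ is a stationary point of Problem~\ref{prob:opt-asymp-T}. Passing to a convergent subsequence along which, say, block $j=1$ is updated, and using continuity of $g_1$ and of the projection, we obtain in the limit
\begin{align*}
\mathbf T_1^\star\;\in\;\argmax_{\mathbf T_1\in\mathcal T_1}\;g_1(\mathbf T_1;\mathbf T_1^\star,\mathbf T_2^\star).
\end{align*}
Writing the KKT conditions for this convex problem and using the first-order consistency property, these conditions coincide with the first-block KKT conditions of Problem~\ref{prob:opt-asymp-T} at $(\mathbf T_1^\star,\mathbf T_2^\star)$. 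An identical argument along the interleaved subsequence where block $j=2$ is updated (which has the same limit point, by continuity and because the un-updated block is frozen across a single step) yields the second-block KKT conditions. Together these are exactly the stationarity conditions of Problem~\ref{prob:opt-asymp-T}. The main obstacle is handling this cyclic-blocks argument cleanly: one must argue that both block-subsequences share the same limit point, which rests on the fact that $\|\mathbf T_j(t+1)-\mathbf T_j(t)\|\to 0$ as $t\to\infty$; this in turn follows from strict concavity of $g_j$ on the compact set $\mathcal T_j$ combined with the monotone convergence of the objective, as in the standard BSUM argument of \cite{razaviyayn2013unified}.
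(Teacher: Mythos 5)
Your proposal is correct and follows essentially the same route as the paper: Appendix C likewise verifies the BSUM surrogate conditions --- the tightness and lower-bound properties \eqref{eqn:BSUM-tight}--\eqref{eqn:BSUM-lower-bound}, continuity/differentiability (giving first-order consistency and regularity of $q_{\infty}$), and strict concavity of $g_j$ on $\mathcal T_j$ ensuring a unique solution to Problem~\ref{prob:BSUM-subprob} --- and then invokes Theorem~2(a) of \cite{razaviyayn2013unified}. The only difference is that you additionally unpack the internals of that cited theorem (monotone bounded objective sequence, vanishing successive differences, coordinatewise KKT conditions over the Cartesian product $\mathcal T_1\times\mathcal T_2$), which the paper leaves to the reference.
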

\begin{proof}
Please refer to Appendix C. 
\end{proof}

Different from Algorithm~\ref{alg:local-asymp-sym}, Algorithm~\ref{alg:BSUM} does not rely on a stepsize. Thus, Algorithm~\ref{alg:BSUM} may have more robust convergence performance than Algorithm~\ref{alg:local-asymp-sym}, as we shall illustrate later in Fig.~\ref{fig:simulation-large-convergence}.

In the rest of this subsection, we
consider a special case where $ K_1=K_2\triangleq K $. In this case, $ \lambda_1 P_1^{\frac{2}{\alpha}}\theta_{1,K_1} = \lambda_2 P_2^{\frac{2}{\alpha}}\theta_{2,2,K_2}\triangleq\mu_{1,K} $, $ \lambda_1 P_1^{\frac{2}{\alpha}}\theta_{2,1,K_1} = \lambda_2 P_2^{\frac{2}{\alpha}}\theta_{1,K_2}\triangleq\mu_{2,K} $ and $ \lambda_1 P_1^{\frac{2}{\alpha}}\theta_{3,1,K_1} = \lambda_2 P_2^{\frac{2}{\alpha}}\theta_{3,2,K_2}\triangleq\mu_{3,K} $.
In addition, $ q_{\infty}(\mathbf T_1,\mathbf T_2) $ can be further simplified as
\begin{align}
q_{\infty}(\mathbf T_1,\mathbf T_2) = \sum\limits_{n\in\mathcal N}a_n\frac{\lambda_1 P_1^{\frac{2}{\alpha}} T_{1,n} + \lambda_2 P_2^{\frac{2}{\alpha}} T_{2,n}}{\mu_{1,K}T_{1,n} + \mu_{2,K}T_{2,n} + \mu_{3,K}}, 
\end{align}
which is a concave function of $ (\mathbf T_1,\mathbf T_2) $. 
Thus, Problem~\ref{prob:opt-asymp-T} becomes a convex optimization problem, and a (globally) optimal solution can be obtained by standard convex optimization methods such as interior-point methods. 
However, when $ N $ is very large, standard convex optimization methods may not scale very well.
Motivated by\cite{wen2016cache}, by exploring structural properties of Problem~\ref{prob:opt-asymp-T} in this case, we develop a low-complexity algorithm to obtain an optimal solution. The method consists of two stages. 
In the first stage, we solve a relaxed version of Problem~\ref{prob:opt-asymp-T} to obtain a system of linear equations of an optimal solution to Problem~\ref{prob:opt-asymp-T}. This stage is the same as that in\cite{wen2016cache}, and is included for completeness. 
Denote $ \mathbf R \triangleq (R_n)_{n\in\mathcal N} $, where $ R_n \triangleq P_1^{\frac{2}{\alpha}}\lambda_1T_{1,n} + P_2^{\frac{2}{\alpha}}\lambda_2T_{2,n} $. Specifically, Problem~\ref{prob:opt-asymp-T} can be relaxed as follows. 
\begin{Prob} [Relaxed Version of Problem~\ref{prob:opt-asymp-T} When $ K_1=K_2=K $~\cite{wen2016cache}]
\label{prob:relax-social-opt}
\begin{align}
\max_{\mathbf R}&\quad\sum\limits_{n\in\mathcal N}a_n\frac{R_n}{\theta_{1,K}R_n + \mu_{3,K}},\nonumber\\
s.t. &\quad 0\leq R_n\leq P_1^{\frac{2}{\alpha}}\lambda_1 + P_2^{\frac{2}{\alpha}}\lambda_2, \ n\in\mathcal N,\nonumber\\
&\quad \sum\limits_{n\in\mathcal N}R_n = (P_1^{\frac{2}{\alpha}}\lambda_1 + P_2^{\frac{2}{\alpha}}\lambda_2)K. \nonumber
\end{align}
Let $ \mathbf R^* $ denote the optimal solution to Problem~\ref{prob:relax-social-opt}. 
\end{Prob}

The optimal soluton to Problem~\ref{prob:relax-social-opt} is given by \cite[Proposition 3]{wen2016cache}, i.e.,
\begin{align}
R_n^* = \min\left \{\left [\frac{1}{\theta_{1,K}}\left (\sqrt{\frac{a_n\mu_{3,K}}{\nu^*}} - \mu_{3,K}\right )\right ]^+, P_1^{\frac{2}{\alpha}}\lambda_1 + P_2^{\frac{2}{\alpha}}\lambda_2\right \}, \ n\in\mathcal N, \label{eqn:opt-R}
\end{align}
where $ \nu^* $ is the Lagrange multiplier that satisfies
\begin{align}
\sum\limits_{n\in\mathcal N}R_n^* = \left (P_1^{\frac{2}{\alpha}}\lambda_1 + P_2^{\frac{2}{\alpha}}\lambda_2\right )K. \label{eqn:opt-R-nu}
\end{align}
Note that $ \nu^* $ can be efficiently obtained by using bisection search. In addition, by Proposition 4 in \cite{wen2016cache}, we know that the optimal solution to Problem~\ref{prob:relax-social-opt} and an optimal solution to Problem~\ref{prob:opt-asymp-T} satisfy a system of linear equations: 
\begin{align}
P_1^{\frac{2}{\alpha}}\lambda_1T_{1,n}^* + P_2^{\frac{2}{\alpha}}\lambda_2 T_{2,n}^* = R_n^*, \ n\in\mathcal N. \label{eqn:R-to-T-opt}
\end{align}
In the second stage, we solve the system of linear equations given in~\eqref{eqn:R-to-T-opt} to obtain an optimal solution $ \left (\mathbf T_1^*,\mathbf T_2^*\right ) $ to Problem~\ref{prob:opt-asymp-T}. 
In our case, we can easily show that 
\begin{align}
T_{j,n}^*=\frac{R_n^*}{P_1^{\frac{2}{\alpha}}\lambda_1 + P_2^{\frac{2}{\alpha}}\lambda_2}, \ n\in\mathcal N, \ j=1,2 \label{eqn:R-to-T-mapping}
\end{align}
is a solution to the system of linear equations in~\eqref{eqn:R-to-T-opt}. This stage is different from that in\cite{wen2016cache}, as we can directly obtain $ T_{j,n}^{*} $ using the closed-form expression in~\eqref{eqn:R-to-T-mapping}, due to $ K_1=K_2 $. The details are summarized in Algorithm~\ref{alg:seq-opt}. 
Note that the complexity of Algorithm~\ref{alg:seq-opt} is close to that of one iteration of Algorithm~\ref{alg:BSUM}. Thus, the complexity of Algorithm~\ref{alg:seq-opt} is much lower than Algorithm~\ref{alg:BSUM}.

\begin{algorithm}[t]
\caption{Globally Optimal Solution}
\footnotesize{\begin{algorithmic}[1]
\STATE	Obtain $ R_n^* $ by~\eqref{eqn:opt-R} and \eqref{eqn:opt-R-nu}. 
\STATE	Compute $ T_{j,n}^* $ by~\eqref{eqn:R-to-T-mapping}, $ n\in\mathcal N $, $ j=1,2 $. 
\end{algorithmic}}\label{alg:seq-opt}
\end{algorithm}

\section{Competitive Caching Design}\label{Sec:compet-design}

In this section, we study the scenario that the two tiers of POAs are managed by two different operators, e.g., IEEE 802.11 APs of two owners. The two different operators have their own interests and thus cannot be jointly managed. Besides, one operator may be sacrificed in order to achieve the maximum total utility. 
Therefore, we propose a game theoretic approach and adopt a NE as a desirable outcome.
We first formulate the competitive caching design for the two different operators within the framework of game theory.
Then, we characterize a NE of the game and develop an algorithm to obtain a NE.

\subsection{Game Formulation}

In this subsection, we formulate the competitive caching design for the two different operators within the framework of game theory. We consider a strategic noncooperative game, where the two operators are
the players.  The utility function of player $ j $ is the successful transmission probability  for  tier $ j $, i.e., $ q_{j,\infty}(\mathbf T_j,\mathbf T_{\overline j}) $. Each tier $ j $ competes against the other tier $ \overline j $ by
choosing its caching distribution $ \mathbf T_j $ (i.e., strategy or action) in the set of admissible strategies $ \mathcal T_j $
to maximize its utility function, i.e., $ q_{j,\infty}(\mathbf T_j,\mathbf T_{\overline j}) $.

\begin{Prob} [Competitive Caching Game] \label{prob:opt-asymp-game}
For all $ j=1,2 $, we have
\begin{align}
\max_{\mathbf T_j} &\quad q_{j,\infty}(\mathbf T_j,\mathbf T_{\overline j})\nonumber\\
\text{s.t.} &\quad \mathbf T_j \in\mathcal T_{j}, \nonumber
\end{align}
\end{Prob}
where $q_{j,\infty}(\mathbf T_j,\mathbf T_{\overline j})$ is given by \eqref{eqn:f-j-k-infty-sym}  and  $ \mathcal T_{j} $ is given by~\eqref{eqn:strategy-set-j}. Let $\mathcal G $ denote the game.

A solution, i.e., a NE,\footnote{A NE is reached when each player, given the strategy profiles of the others, does not get any performance increase by unilaterally changing his own strategy\cite{scutari2008competitive}.}
of game $ \mathcal G $ is defined as follows.
\begin{Def} [Nash Equilibrium of Game $ \mathcal G $]
\label{def:nash-equilibrium}
A (pure) strategy profile $ (\mathbf T_1^{\dagger},\mathbf T_2^{\dagger})\in \mathcal T_{1}\times \mathcal T_{2} $ is a NE of game $ \mathcal G $ if
\begin{align}
q_{j,\infty}(\mathbf T_j^{\dagger},\mathbf T_{\overline j}^{\dagger}) \geq q_{j,\infty}(\mathbf T_j,\mathbf T_{\overline j}^{\dagger}),\ \mathbf T_j\in \mathcal T_{j},\ j=1,2.
\end{align}
\end{Def}

By Definition~\ref{def:nash-equilibrium}, we know that a NE of game $ \mathcal G $ is given by the following problem.
\begin{Prob}[NE of Game $ \mathcal G $]
For all $ j=1,2 $, we have
\begin{align}
\mathbf T_j^{\dagger} = \argmax\limits_{\mathbf T_j} &\quad q_{j,\infty} (\mathbf T_j,\mathbf T_{\overline j}^{\dagger}), \nonumber\\
\text{s.t.} &\quad \mathbf T_j\in\mathcal T_j. \nonumber
\end{align}
\label{prob:nash-equilibrium}
\end{Prob}

\subsection{Nash Equilibrium}

In this subsection, we characterize a NE of game $ \mathcal G $. 
First, we show  the existence and uniqueness of the NE of game $ \mathcal G $.
\begin{Lem} [Existence and Uniqueness of the NE of Game $ \mathcal G $]
\label{Lem:NE-exis-uniq}
There exists a unique NE of game $ \mathcal G $. 
\end{Lem}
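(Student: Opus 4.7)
The plan is to prove the two assertions separately: existence via a standard fixed-point theorem, and uniqueness via a monotonicity argument on the pseudo-gradient.

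For existence, I would invoke the Debreu--Glicksberg--Fan theorem. Three conditions need to be verified. First, the strategy set $\mathcal{T}_j$ defined in \eqref{eqn:strategy-set-j} is a nonempty, compact, convex subset of $\mathbb{R}^N$, being a polytope cut out by one linear equality and box constraints. Second, each payoff $q_{j,\infty}(\mathbf{T}_j,\mathbf{T}_{\overline{j}})$ is jointly continuous in $(\mathbf{T}_1,\mathbf{T}_2)$ on $\mathcal{T}_1\times\mathcal{T}_2$, since every summand in \eqref{eqn:f-j-k-infty-sym} is a rational function whose denominator $\theta_{1,K_j}T_{j,n}+\theta_{2,j,K_j}T_{\overline{j},n}+\theta_{3,j,K_j}$ is bounded away from zero on the simplex (the paper has already reduced to the regime $\theta_{1,K_j}>0$, and $\theta_{3,j,K_j}>0$ plays the role of the strictly positive constant term). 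Third, for each fixed $\mathbf{T}_{\overline{j}}$, each summand has the form $x\mapsto a_n x/(\theta_{1,K_j}x+c)$ with $c>0$, whose second derivative is $-2 a_n \theta_{1,K_j} c /(\theta_{1,K_j}x+c)^{3}<0$, so $q_{j,\infty}(\cdot,\mathbf{T}_{\overline{j}})$ is a sum of strictly concave functions of disjoint variables, hence strictly concave in $\mathbf{T}_j$. The three hypotheses then deliver existence of a pure-strategy NE.

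For uniqueness, I would use Rosen's diagonal strict concavity theorem, reformulated as strict monotonicity of the pseudo-gradient. Define
\begin{equation}
F(\mathbf{T}_1,\mathbf{T}_2)\;=\;\bigl(-\nabla_{\mathbf{T}_1}q_{1,\infty}(\mathbf{T}_1,\mathbf{T}_2),\ -\nabla_{\mathbf{T}_2}q_{2,\infty}(\mathbf{T}_2,\mathbf{T}_1)\bigr).
\end{equation}
By the KKT characterization, the NE set coincides with the solution set of the variational inequality $\langle F(\mathbf{T}^{\dagger}),\mathbf{T}-\mathbf{T}^{\dagger}\rangle\ge 0$ for all $\mathbf{T}\in\mathcal{T}_1\times\mathcal{T}_2$, so if $F$ is strictly monotone the VI has at most one solution. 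I would establish strict monotonicity by differentiating the closed-form entries of $F$ and exhibiting positive weights $r_1,r_2>0$ for which the symmetric part of the weighted Jacobian $r_1\nabla_{\mathbf{T}_1}F_1+r_2\nabla_{\mathbf{T}_2}F_2$ is positive definite on $\mathcal{T}_1\times\mathcal{T}_2$. The diagonal blocks are easy: they are diagonal with strictly positive entries coming from own-strategy strict concavity. The off-diagonal blocks are also diagonal (the cross partials vanish unless $n=n'$), with entries $\pm a_n \theta_{2,j,K_j}(\theta_{1,K_j}T_{j,n}-\theta_{2,j,K_j}T_{\overline{j},n}-\theta_{3,j,K_j})/d_{j,n}^{3}$, which are uniformly bounded on the compact domain. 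So the whole question reduces, coordinate by coordinate across $n$, to verifying positive definiteness of a $2\times 2$ matrix, which I can do by picking $r_1,r_2$ proportional to the powers $P_j^{2/\alpha}$ (the scaling that already appears in Algorithm~\ref{alg:seq-opt}).

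The main obstacle will be the uniqueness step, specifically the sign and magnitude control of the off-diagonal cross partials. Since $\theta_{1,K_j}T_{j,n}-\theta_{2,j,K_j}T_{\overline{j},n}-\theta_{3,j,K_j}$ can have either sign, one cannot rely on elementwise diagonal dominance; the argument needs the correct weights $r_j$ and a global bound relating the cross-derivative to the (strictly negative) diagonal second derivative. If the Rosen route proves too tight in some corner of the domain, a fallback would be to use the closed-form best-response structure (analogous to Lemma~\ref{Lem:BSUM-subprob-opt}) to argue directly that two distinct NEs would violate the coupled KKT system, by summing the stationarity conditions of the two tiers against the direction $\mathbf{T}^{\dagger}-\mathbf{T}^{\ddagger}$ and exploiting strict concavity in each own strategy to get a contradiction.
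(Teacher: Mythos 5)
Your existence argument is correct and coincides with the paper's own: Appendix D invokes Proposition 20.3 of \cite{osborne1994course} with exactly your three hypotheses (nonempty compact convex $\mathcal T_j$, joint continuity, own-strategy quasi-concavity inherited from strict concavity), and your strict-concavity computation for each summand $T_{j,n}\mapsto a_nT_{j,n}/(\theta_{1,K_j}T_{j,n}+c)$ with $c\geq\theta_{3,j,K_j}>0$ is fine. That half needs no further comment.

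The uniqueness half contains a genuine gap, and it sits exactly where you flag the ``main obstacle.'' Your program is to verify Rosen's diagonal strict concavity pointwise, i.e., to make the symmetrized weighted pseudo-gradient Jacobian definite coordinate-by-coordinate; but that $2\times2$ test is false in part of the admissible parameter range, and no choice of weights $r_1,r_2$ repairs it. Per coordinate $n$, the diagonal entries are $-2r_ja_n\theta_{1,K_j}\left(\theta_{2,j,K_j}T_{\overline j,n}+\theta_{3,j,K_j}\right)/d_{j,n}^3$ and the symmetrized off-diagonal is $\tfrac12\left(r_1c_{1,n}+r_2c_{2,n}\right)$ with $c_{j,n}=a_n\theta_{2,j,K_j}\left(\theta_{1,K_j}T_{j,n}-\theta_{2,j,K_j}T_{\overline j,n}-\theta_{3,j,K_j}\right)/d_{j,n}^3$, as you computed. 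Now take the symmetric configuration $\theta_{1,K_j}=1$, $\theta_{3,j,K_j}=\epsilon$ small, $\theta_{2,j,K_j}=M$ large --- attainable, since as $\tau/W\to0$ one has $\theta_{1,k}\to1$ and $\theta_{3,j,k}\to0$ while $\theta_{2,j,k}\to(\lambda_{\overline j}/\lambda_j)\sigma_{\overline j}^{2/\alpha}$, which is large when $\lambda_{\overline j}\gg\lambda_j$ --- and evaluate at $T_{1,n}=T_{2,n}=t$. The determinant test then reduces, even with the optimal weights $r_1=r_2$ (by AM--GM, $(r_1+r_2)^2\geq4r_1r_2$, so unequal weights only hurt here), to $2(Mt+\epsilon)>M\lvert Mt+\epsilon-t\rvert$, which fails for any $M>3$ at $t$ near $1$ and $\epsilon$ small; the bad direction also survives the restriction to the tangent space of $\mathcal T_1\times\mathcal T_2$ by compensating across two coordinates held at similar values. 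So strict monotonicity of $F$ on the whole strategy space is simply not available --- consistent with the fact that the paper itself needs the parameter condition of Theorem~\ref{Thm:NE-alg-cvg} to control exactly these cross terms for best-response convergence. Your fallback is moreover circular: adding the two variational inequalities at two putative equilibria yields $\langle F(\mathbf T^{\dagger})-F(\mathbf T^{\ddagger}),\,\mathbf T^{\ddagger}-\mathbf T^{\dagger}\rangle\geq0$, and extracting a contradiction from this \emph{is} strict monotonicity of $F$; own-strategy strict concavity controls only the diagonal blocks and cannot by itself beat the cross terms. For comparison, the paper's Appendix D does not perform the verification you attempt: it asserts in one line that strict concavity implies diagonal strict concavity and cites Theorem 2 of \cite{rosen1965existence}. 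Your instinct to actually check Rosen's condition is the right one, but the check, carried out honestly, does not close, so your proposal as written does not establish uniqueness.
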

\begin{proof}
Please refer to Appendix D. 
\end{proof}

We now obtain the closed-form expression of the unique  NE of game $ \mathcal G $. Since $ q_{j,\infty}(\mathbf T_j,\mathbf T_{\overline j}) $ is strictly concave on $ \mathcal T_{j} $ for any given $ \mathbf T_{\overline j}\in\mathcal T_{\overline j} $, Problem~\ref{prob:nash-equilibrium} is convex and Slater's condition is satisfied, implying that strong duality holds. Using KKT conditions, we can solve Problem~\ref{prob:nash-equilibrium} and show Lemma~\ref{Lem:NE-constr}.
\begin{Lem} [NE of Game $ \mathcal G $]
\label{Lem:NE-constr}
Game $ \mathcal G $ has a unique NE   $ (\mathbf T_1^{\dagger},\mathbf T_2^{\dagger}) $ which is given by 
\begin{align}
&T_{j,n}^{\dagger}= \nonumber\\ &\min{\left\{\left[\frac{1}{\theta_{1,K_{j}}}\sqrt{\frac{a_n\left(\theta_{2,j,K_{j}}T_{\overline j,n}^{\dagger}+\theta_{3,j,K_{j}}\right)}{\nu_j^{\dagger}}} - \frac{\theta_{2,j,K_{j}}T_{\overline j,n}^{\dagger}+\theta_{3,j,K_{j}}}{\theta_{1,K_{j}}}\right]^+,1\right\}},\  n\in\mathcal N,\ j=1,2, \nonumber
\end{align}
where for all $ j=1,2 $, $\nu_{j}^{\dagger}$ is the Lagrange multiplier that satisties 
\begin{align}
\sum_{n\in\mathcal N}T_{j,n}^{\dagger} = K_{j}. \nonumber
\end{align}
\end{Lem}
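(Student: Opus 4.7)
\begin{Proof}[Proof proposal]
The plan is to reduce the characterization of the NE to a family of two coupled convex optimization problems, one per tier, and then apply KKT theory. By Lemma~\ref{Lem:NE-exis-uniq} the NE exists and is unique, so it suffices to show that any strategy profile $(\mathbf T_1^{\dagger},\mathbf T_2^{\dagger})$ that satisfies the closed-form expression in the statement solves Problem~\ref{prob:nash-equilibrium} for both $j=1,2$. The starting point is the observation that for fixed $\mathbf T_{\overline j}\in\mathcal T_{\overline j}$, the objective $q_{j,\infty}(\mathbf T_j,\mathbf T_{\overline j})$ given in~\eqref{eqn:f-j-k-infty-sym} is a sum of strictly concave functions of $T_{j,n}$ of the form $\frac{a_n x}{\theta_{1,K_j}x+c_n}$ with $c_n=\theta_{2,j,K_j}T_{\overline j,n}+\theta_{3,j,K_j}>0$; hence Problem~\ref{prob:nash-equilibrium} is a convex program on the polytope $\mathcal T_j$, and Slater's condition clearly holds since $T_{j,n}=K_j/N$ is strictly feasible. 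Consequently, KKT conditions are both necessary and sufficient for optimality.

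The main computation is to form the Lagrangian
\begin{align*}
L_j(\mathbf T_j,\nu_j,\boldsymbol\mu_j,\boldsymbol\eta_j)=q_{j,\infty}(\mathbf T_j,\mathbf T_{\overline j}^{\dagger})-\nu_j\Bigl(\sum_{n\in\mathcal N}T_{j,n}-K_j\Bigr)+\sum_{n\in\mathcal N}\mu_{j,n}T_{j,n}-\sum_{n\in\mathcal N}\eta_{j,n}(T_{j,n}-1),
\end{align*}
with $\mu_{j,n},\eta_{j,n}\ge 0$ the multipliers for the box constraints $0\le T_{j,n}\le 1$ and $\nu_j$ the multiplier for the sum constraint. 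Differentiating gives
\begin{align*}
\frac{\partial q_{j,\infty}}{\partial T_{j,n}}=\frac{a_n(\theta_{2,j,K_j}T_{\overline j,n}^{\dagger}+\theta_{3,j,K_j})}{\bigl(\theta_{1,K_j}T_{j,n}+\theta_{2,j,K_j}T_{\overline j,n}^{\dagger}+\theta_{3,j,K_j}\bigr)^{2}},
\end{align*}
and stationarity together with complementary slackness yields the three-case waterfilling structure: on the set where the box constraints are inactive, solving the quadratic equation for $T_{j,n}$ produces exactly the interior expression in the lemma; when the unconstrained minimizer is negative we must have $T_{j,n}^{\dagger}=0$, captured by $[\cdot]^+$; when it exceeds one we have $T_{j,n}^{\dagger}=1$, captured by the outer $\min\{\cdot,1\}$. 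The multiplier $\nu_j^{\dagger}$ is then pinned down by primal feasibility $\sum_{n\in\mathcal N}T_{j,n}^{\dagger}=K_j$.

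The only subtlety I anticipate is verifying that $\nu_j^{\dagger}>0$ so that the square root is well defined and the characterization is consistent. This follows because if $\nu_j^{\dagger}\le 0$ then the stationarity condition would force $T_{j,n}$ arbitrarily large for every $n$ with $a_n>0$, contradicting the cache-size constraint $\sum_n T_{j,n}=K_j<N$. Since the right-hand side of the formula is strictly decreasing and continuous in $\nu_j$ on $(0,\infty)$ and ranges over all values in $[0,N]$, a unique $\nu_j^{\dagger}>0$ satisfying $\sum_n T_{j,n}^{\dagger}=K_j$ exists and can be computed by bisection. Finally, running this argument in parallel for $j=1$ and $j=2$, both tiers simultaneously satisfy their respective best-response optimality, i.e., Problem~\ref{prob:nash-equilibrium} is solved; uniqueness is inherited from Lemma~\ref{Lem:NE-exis-uniq}, completing the proof.
\end{Proof}
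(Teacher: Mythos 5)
Your proposal is correct and follows essentially the same route as the paper, which establishes uniqueness via Lemma~\ref{Lem:NE-exis-uniq} (Rosen's theorem) and then obtains the closed form by noting that each best-response problem (Problem~\ref{prob:nash-equilibrium}) is convex with Slater's condition holding, so that KKT conditions are necessary and sufficient. Your explicit gradient computation, the waterfilling case analysis, and the verification that $\nu_j^{\dagger}>0$ (forced by the strictly positive partial derivatives together with the constraint $\sum_{n}T_{j,n}=K_j<N$) simply fill in the details the paper leaves implicit.
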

\begin{Rem}
The file popularity distribution $ \mathbf a $ and the physical layer parameters (captured in $\theta_{1,K_j}$, $\theta_{2,j,K_j}$ and $\theta_{3,j,K_j}$) jointly affect $ \nu_j^{\dagger} $. Given $ \nu_j^{\dagger} $ and $ T_{\overline j,n}^{\dagger} $, $ n\in\mathcal N $, the physical layer parameters (captured in $\theta_{1,K_j}$, $\theta_{2,j,K_j}$ and $\theta_{3,j,K_j}$) affect the caching probabilities of all the files in the same way, while the popularity of file $ n $ (i.e., $ a_n $) only affects the caching probability of file $ n $ (i.e., $ T_{j,n}^{\dagger} $)~\cite{cui2016analysis}.
\end{Rem}

\subsection{Algorithm Design}
In this subsection, we develop an iterative algorithm to obtain the NE of game $ \mathcal G $. It alternatively updates $ \mathbf T_1 $ while $ \mathbf T_2 $ is fixed and $ \mathbf T_2 $ while $ \mathbf T_1 $ is fixed, by solving the following problem at each iteration $ t $.
\begin{Prob} [Optimization at Iteration $ t $] \label{prob:NE-alg-subprob}
For player $ j = ((t+1)\text{ mod }2)+1$, we have
\begin{align}
\mathbf T_j(t+1) = \argmax_{\mathbf T_j}& \quad q_{j,\infty}(\mathbf T_j,\mathbf T_{\overline j}(t))  \nonumber\\
\text{s.t.} & \quad \mathbf T_j\in\mathcal T_j. \nonumber
\end{align}
\end{Prob}

Similar to Problem~\ref{prob:nash-equilibrium}, using KKT conditions, we can obtain the closed-form expression of the unique optimal solution to Problem~\ref{prob:NE-alg-subprob}. We present it below for completeness.
\begin{Lem} [The Optimal Solution to Problem~\ref{prob:NE-alg-subprob}] \label{Lem:NE-alg-subprob-opt}
For all $ j=((t+1)\text{ mod }2)+1 $, the optimal solution to Problem~\ref{prob:NE-alg-subprob} is given by
\begin{align}
&T_{j,n}(t+1)=\nonumber\\
&\min{\left\{\left[\frac{1}{\theta_{1,K_{j}}}\sqrt{\frac{a_n\left(\theta_{2,j,K_{j}}T_{\overline j,n}(t)+\theta_{3,j,K_{j}}\right)}{\nu_j^{\dagger}(t)}} - \frac{\theta_{2,j,K_{j}}T_{\overline j,n}(t)+\theta_{3,j,K_{j}}}{\theta_{1,K_{j}}}\right]^+,1\right\}},\  n\in\mathcal N, \nonumber
\end{align}
where $\nu_j^{\dagger}(t)$ is the Lagrange multiplier that satisfies 
\begin{align}
\sum_{n\in\mathcal N}T_{j,n}(t+1) = K_{j}. \nonumber
\end{align}
\end{Lem}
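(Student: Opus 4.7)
The plan is to verify that Problem~\ref{prob:NE-alg-subprob} is a strictly convex program and then derive the stated closed form by applying the KKT conditions componentwise. Fix $j$ and write $\theta \triangleq \theta_{1,K_j}$ and $c_n \triangleq \theta_{2,j,K_j}T_{\overline j,n}(t)+\theta_{3,j,K_j}$; note $\theta,c_n>0$ under the standing assumption $\theta_{1,K_j}>0$ and $\mathbf T_{\overline j}(t)\in\mathcal T_{\overline j}$. Each summand of $q_{j,\infty}(\mathbf T_j,\mathbf T_{\overline j}(t))$ has the form $a_n T_{j,n}/(\theta T_{j,n}+c_n)$ whose second derivative in $T_{j,n}$ equals $-2a_n\theta c_n/(\theta T_{j,n}+c_n)^3<0$; hence the objective is strictly concave on $\mathcal T_j$. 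Since $\mathcal T_j$ is a polytope containing the interior point $T_{j,n}=K_j/N$, Slater's condition holds, so KKT conditions are necessary and sufficient for the unique optimum.

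Next I would form the Lagrangian with multiplier $\nu_j$ for the equality $\sum_n T_{j,n}=K_j$ and multipliers $\eta_n\ge 0$, $\mu_n\ge 0$ for $T_{j,n}\ge 0$ and $T_{j,n}\le 1$, respectively. The stationarity condition reduces to
\begin{equation*}
\frac{a_n c_n}{(\theta T_{j,n}+c_n)^2}=\nu_j+\mu_n-\eta_n,\qquad n\in\mathcal N.
\end{equation*}
Then I would split into three cases using complementary slackness. If $0<T_{j,n}<1$ then $\mu_n=\eta_n=0$, yielding the interior expression $T_{j,n}=\theta^{-1}\bigl(\sqrt{a_n c_n/\nu_j}-c_n\bigr)$. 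If $T_{j,n}=0$ then $\mu_n=0$ and $\eta_n\ge 0$ forces $\sqrt{a_n c_n/\nu_j}\le c_n$, so the interior formula is nonpositive and is clipped by $[\,\cdot\,]^+$. If $T_{j,n}=1$ then $\eta_n=0$ and $\mu_n\ge 0$ gives $\sqrt{a_n c_n/\nu_j}\ge \theta+c_n$, so the interior formula exceeds one and is clipped by $\min\{\cdot,1\}$. These three cases merge into the single expression in the statement.

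Finally I would justify the existence and uniqueness of the Lagrange multiplier $\nu_j^{\dagger}(t)$: the right-hand side of the candidate formula for $T_{j,n}(t+1)$ is continuous in $\nu_j$, nonincreasing, tends to $1$ as $\nu_j\downarrow 0$ and to $0$ as $\nu_j\uparrow\infty$; hence $\sum_n T_{j,n}(t+1)$ is a continuous nonincreasing function of $\nu_j$ taking values in $[0,N]$, so there exists a unique $\nu_j^{\dagger}(t)>0$ matching the required total $K_j<N$, which can be located by bisection.

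I do not anticipate a genuine obstacle, since the derivation parallels that of Lemma~\ref{Lem:BSUM-subprob-opt} with the gradient of the linearized term removed. The only delicate step is the three-way case analysis to ensure that the $\min\{[\cdot]^+,1\}$ clipping correctly encodes the box constraints; once that is in place, the monotonicity argument for $\nu_j^{\dagger}(t)$ is immediate.
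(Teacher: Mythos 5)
Your proposal is correct and follows essentially the same route as the paper, which likewise observes that the problem is strictly concave over the convex set $\mathcal T_j$ with Slater's condition holding, and invokes KKT conditions to obtain the closed-form water-filling-type solution with the multiplier $\nu_j^{\dagger}(t)$ found by bisection (the paper states this without spelling out the componentwise case analysis, which you supply correctly). The only cosmetic caveat is your claim that $\nu_j^{\dagger}(t)$ is unique: since $\sum_{n\in\mathcal N}T_{j,n}(t+1)$ is nonincreasing but not strictly decreasing in $\nu_j$ where all components are clipped, only the optimizer $\mathbf T_j(t+1)$ (by strict concavity) is guaranteed unique, which is all the lemma requires.
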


Note that Problem~\ref{prob:nash-equilibrium} and Problem~\ref{prob:NE-alg-subprob} in Lemma~\ref{Lem:NE-alg-subprob-opt} share  similar forms. Thus, the NE of game $ \mathcal G $ in Lemma~\ref{Lem:NE-constr} and the solution to Problem~\ref{prob:NE-alg-subprob} share similar forms.
Based on the optimal solution to Problem~\ref{prob:NE-alg-subprob}, at iteration $ t $, we update the strategy of player $ j $, and fix the strategy of player $ \overline j $, where $ j=((t+1)\text{ mod }2)+1 $. The details for obtaining the NE of game $ \mathcal G $ is summarized in Algorithm~\ref{alg:game-NE}.

\begin{algorithm}[t]
	\caption{Nash Equilibrium of Game $ \mathcal G $}
	\footnotesize{\begin{algorithmic}[1]
			\STATE Initialize  $t=1$ and choose any $ \mathbf T_j(1) \in\mathcal T_j $ (e.g.,  $T_{j,n}(1)=\frac{K_j}{N}$  for all $n\in \mathcal N$), $j=1,2$.
			\STATE Compute $ j=((t+1)\text{ mod }2)+1 $.
			\STATE   Compute $ \mathbf T_j(t+1) = \argmax\limits_{\mathbf T_j\in\mathcal T_j}q_{j,\infty}(\mathbf T_j,\mathbf T_{\overline j}(t)) $.
			\STATE Set $ \mathbf T_{\overline j}(t+1) = \mathbf T_{\overline j}(t) $.
			\STATE   Set $t=t+1$ and go to Step 2.
	\end{algorithmic}}\label{alg:game-NE}
\end{algorithm}

Note that in general, it is quite difficult to guarantee that an iterative algorithm can converge to the NE of a game, especially for a large-scale wireless network.   By carefully analyze structural properties of the competitive caching design game, we provide a convergence condition for Algorithm~\ref{alg:game-NE}.

\begin{Thm} [Convergence of Algorithm~\ref{alg:game-NE}]
\label{Thm:NE-alg-cvg}
If
\begin{align}
\max\left\{1,\ \Bigg{\lvert}{1-\frac{\theta_{1,K_1}}{\theta_{3,1,K_1}}}\Bigg{\rvert}\right\}
\max\left\{1,\ \Bigg{\lvert}{1-\frac{\theta_{1,K_{2}}}{\theta_{3,2,K_{2}}}}\Bigg{\rvert}\right\} < 4, 
\end{align}
where $\theta_{1,k}$, $\theta_{2,j,k}$ and $\theta_{3,j,k}$ are given by \eqref{eqn:c_1_k}, \eqref{eqn:c_2_k} and \eqref{eqn:c_3_k},
Algorithm~\ref{alg:game-NE} converges to the unique NE of game $ \mathcal G $ for all $ \mathbf T_j(1)\in\mathcal T_j $, $ j=1,2 $, i.e.,   $ (\mathbf T_1(t),\mathbf T_2(t)) \to (\mathbf T_1^{\dagger},\mathbf T_2^{\dagger}) $ as $ t\to\infty $, where $ (\mathbf T_1^{\dagger},\mathbf T_2^{\dagger}) $ is  given by Lemma~\ref{Lem:NE-constr}.
\end{Thm}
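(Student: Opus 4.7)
The plan is to realize Algorithm~\ref{alg:game-NE} as the iteration of a composed best-response map and prove that this map is a strict contraction under the stated hypothesis; convergence then follows from the Banach fixed-point theorem. By the alternating structure of Steps~2--5 of the algorithm, two consecutive updates produce $\mathbf T_2(t+2) = (\mathcal B_2\circ\mathcal B_1)(\mathbf T_2(t))$, where $\mathcal B_j:\mathcal T_{\overline j}\to\mathcal T_j$ is the (single-valued) best-response map of player $j$, given in closed form by Lemma~\ref{Lem:NE-alg-subprob-opt}. It therefore suffices to prove that $\mathcal B_2\circ\mathcal B_1$ is a strict contraction on $\mathcal T_2$. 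By uniqueness of the NE (Lemma~\ref{Lem:NE-exis-uniq}), its unique fixed point must equal $\mathbf T_2^{\dagger}$, and convergence of the $\mathbf T_1$-subsequence then follows automatically from continuity of $\mathcal B_1$.

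The heart of the argument is a Lipschitz bound on $\mathcal B_j$. Working on the interior of an activity region of Lemma~\ref{Lem:NE-alg-subprob-opt} (where neither $[\,\cdot\,]^+$ nor $\min\{\cdot,1\}$ is active), I would use the stationarity identity
\[
\sqrt{\frac{a_n(\theta_{2,j,K_j}T_{\overline j,n}+\theta_{3,j,K_j})}{\nu_j}} \;=\; \theta_{1,K_j}T_{j,n} + \theta_{2,j,K_j}T_{\overline j,n} + \theta_{3,j,K_j}
\]
to eliminate $\nu_j$ from the explicit form in Lemma~\ref{Lem:NE-alg-subprob-opt}, obtaining the simplified partial derivative
\[
\frac{\partial T_{j,n}}{\partial T_{\overline j,n}} \;=\; \frac{\theta_{2,j,K_j}}{2}\!\left(\frac{T_{j,n}}{\theta_{2,j,K_j}T_{\overline j,n}+\theta_{3,j,K_j}} - \frac{1}{\theta_{1,K_j}}\right).
\]
Combining this with the sensitivity contribution of the Lagrange multiplier $\nu_j$ (implicitly fixed by $\sum_n T_{j,n}=K_j$) and the bounds $0\le T_{j,n}\le 1$ and $T_{\overline j,n}\ge 0$, the aim is to establish the per-player bound
\[
L_j \;\le\; \tfrac{1}{2}\max\!\left\{1,\;\left|1-\tfrac{\theta_{1,K_j}}{\theta_{3,j,K_j}}\right|\right\}.
\]

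Composing the two bounds yields a Lipschitz constant $L_1L_2 \le \tfrac{1}{4}\prod_{j=1}^{2}\max\{1,|1-\theta_{1,K_j}/\theta_{3,j,K_j}|\}$ for $\mathcal B_2\circ\mathcal B_1$, which is strictly less than $1$ exactly when the hypothesis of the theorem holds. The Banach fixed-point theorem then gives geometric convergence of the odd-indexed subsequence $\{\mathbf T_2(2t+1)\}$ to $\mathbf T_2^{\dagger}$ from any initial $\mathbf T_2(1)\in\mathcal T_2$, and applying $\mathcal B_1$ produces $\mathbf T_1(2t+2)\to\mathbf T_1^{\dagger}$.

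The main technical obstacle is that $\mathcal B_j$ is only piecewise smooth due to the clippings, so the derivative calculation above is valid only in the interior of an activity region. The cleanest route around this is to recognize $\mathcal B_j$ as the Euclidean projection onto the convex set $\mathcal T_j$ of an explicit unconstrained map: projections onto convex sets are non-expansive, so the Lipschitz constant of the unconstrained map immediately carries over to $\mathcal B_j$. The parallel subtlety of the implicit dependence of $\nu_j$ on $\mathbf T_{\overline j}$ through $\sum_n T_{j,n}=K_j$ must be treated via the implicit function theorem, and the monotone dependence of the closed form on $\nu_j$ should ensure the resulting correction is absorbed by the $\tfrac{1}{2}$ prefactor inherited from the square root in the KKT condition. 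These two points are where the real work lies; everything else is routine once the Lipschitz bound is in hand.
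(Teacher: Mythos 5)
Your overall strategy---realize the algorithm as iteration of the composed best-response map, prove a contraction, and invoke Banach---is in spirit exactly what the paper does, since it cites Theorem~1 of Li and Ba\c{s}ar, whose convergence condition is precisely a spectral-norm bound on the composed best-response sensitivities, namely $\lVert(\nabla^2_{\mathbf T_j^2}q_{j,\infty})^{-1}\nabla^2_{\mathbf T_j\mathbf T_{\overline j}}q_{j,\infty}\cdot(\nabla^2_{\mathbf T_{\overline j}^2}q_{\overline j,\infty})^{-1}\nabla^2_{\mathbf T_{\overline j}\mathbf T_j}q_{\overline j,\infty}\rVert_2<1$; your interior derivative computation reproduces the diagonal entries the paper obtains in Appendix~E. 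However, your per-player bound is false as stated. The interior derivative you correctly derived equals $\frac{\theta_{2,j,K_j}}{2\theta_{1,K_j}}\bigl|1-\frac{\theta_{1,K_j}T_{j,n}}{\theta_{2,j,K_j}T_{\overline j,n}+\theta_{3,j,K_j}}\bigr|$, and the prefactor $\theta_{2,j,K_j}/\theta_{1,K_j}=\frac{\lambda_{\overline j}}{\lambda_j}\sigma_{\overline j}^{2/\alpha}$ can be arbitrarily large, so $L_j\leq\frac{1}{2}\max\{1,|1-\theta_{1,K_j}/\theta_{3,j,K_j}|\}$ does not hold for a single tier. The constant $\frac{1}{4}$ in the theorem emerges only after composing the two tiers, because $\frac{\theta_{2,1,K_1}}{\theta_{1,K_1}}\cdot\frac{\theta_{2,2,K_2}}{\theta_{1,K_2}}=(\sigma_1\sigma_2)^{2/\alpha}=1$; this cancellation across players is essential to the result and your argument never establishes it---your composed bound $L_1L_2\leq\frac{1}{4}\prod_j\max\{\cdot\}$ happens to be the right formula, but it does not follow from the (incorrect) per-player bounds you propose to prove.

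More seriously, both of the devices you propose for the ``real work'' fail. The best response $\mathcal B_j$ is \emph{not} the Euclidean projection onto $\mathcal T_j$ of an explicit unconstrained map: in the water-filling solution of Lemma~\ref{Lem:NE-alg-subprob-opt}, the multiplier $\nu_j$ enters multiplicatively inside the square root rather than as an additive shift, so the solution cannot be written as $\Pi_{\mathcal T_j}(F(\mathbf T_{\overline j}))$ for any $\nu_j$-free map $F$, and non-expansiveness of convex projections transfers nothing. Likewise, the implicit dependence of $\nu_j$ on $\mathbf T_{\overline j}$ through $\sum_{n}T_{j,n}=K_j$ makes the Jacobian of $\mathcal B_j$ non-diagonal (a rank-one correction appears in every row), and your hope that this correction is ``absorbed by the $\frac{1}{2}$ prefactor'' is exactly the unproven step; it is what the cited Li--Ba\c{s}ar theorem packages, which is why the paper can restrict attention to the unconstrained Hessian ratios, show the composed product is the diagonal matrix with entries $\frac{1}{4}\bigl(1-\frac{\theta_{1,K_j}T_{j,n}}{\theta_{2,j,K_j}T_{\overline j,n}+\theta_{3,j,K_j}}\bigr)\bigl(1-\frac{\theta_{1,K_{\overline j}}T_{\overline j,n}}{\theta_{2,\overline j,K_{\overline j}}T_{j,n}+\theta_{3,\overline j,K_{\overline j}}}\bigr)$, and bound each factor in $[1-\theta_{1,K_j}/\theta_{3,j,K_j},\,1]$ using $0\leq T_{j,n}\leq 1$ and $T_{\overline j,n}\geq 0$, exactly as you do in the smooth interior. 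To repair your write-up, either verify the hypotheses of that theorem (strict concavity, second-order differentiability, existence of best responses, and the norm condition) and cite it, or genuinely carry out the constrained sensitivity analysis---including the multiplier coupling and the active-set boundaries---that your proposal only gestures at.
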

\begin{proof}
	Please refer to Appendix E.
\end{proof}

Note that the convergence condition given in Theorem~\ref{Thm:NE-alg-cvg} can be easily satisfied in most cases we are interested in, which will be shown in Fig.~\ref{fig:simulation-large-convergence}.

\section{Numerical Results}\label{Sec:simu}

In this section, we first illustrate the convergence and complexity of the proposed algorithms. Then, we compare the successful transmission probabilities and caching probabilities of the proposed algorithms with those of existing solutions.
In the simulation, we choose $W = 20\times 10^6$, $\tau =  4\times10^4$, $N = 500$, $\alpha=4$, $\lambda_{1}=5\times10^{-7}$, $\lambda_{2}=3\times10^{-6}$ and $P_1 = 10^{1.6}P_2$. We assume that the popularity follows Zipf distribution, i.e., $a_n=\frac{n^{-\gamma}}{\sum_{n\in \mathcal N}n^{-\gamma}}$, where $\gamma$ is the Zipf exponent. 

\begin{figure}[t]
\begin{center}
\subfigure[\small{$K_1=55$ and $K_2=35$.}]
{\resizebox{7.5cm}{!}{\includegraphics{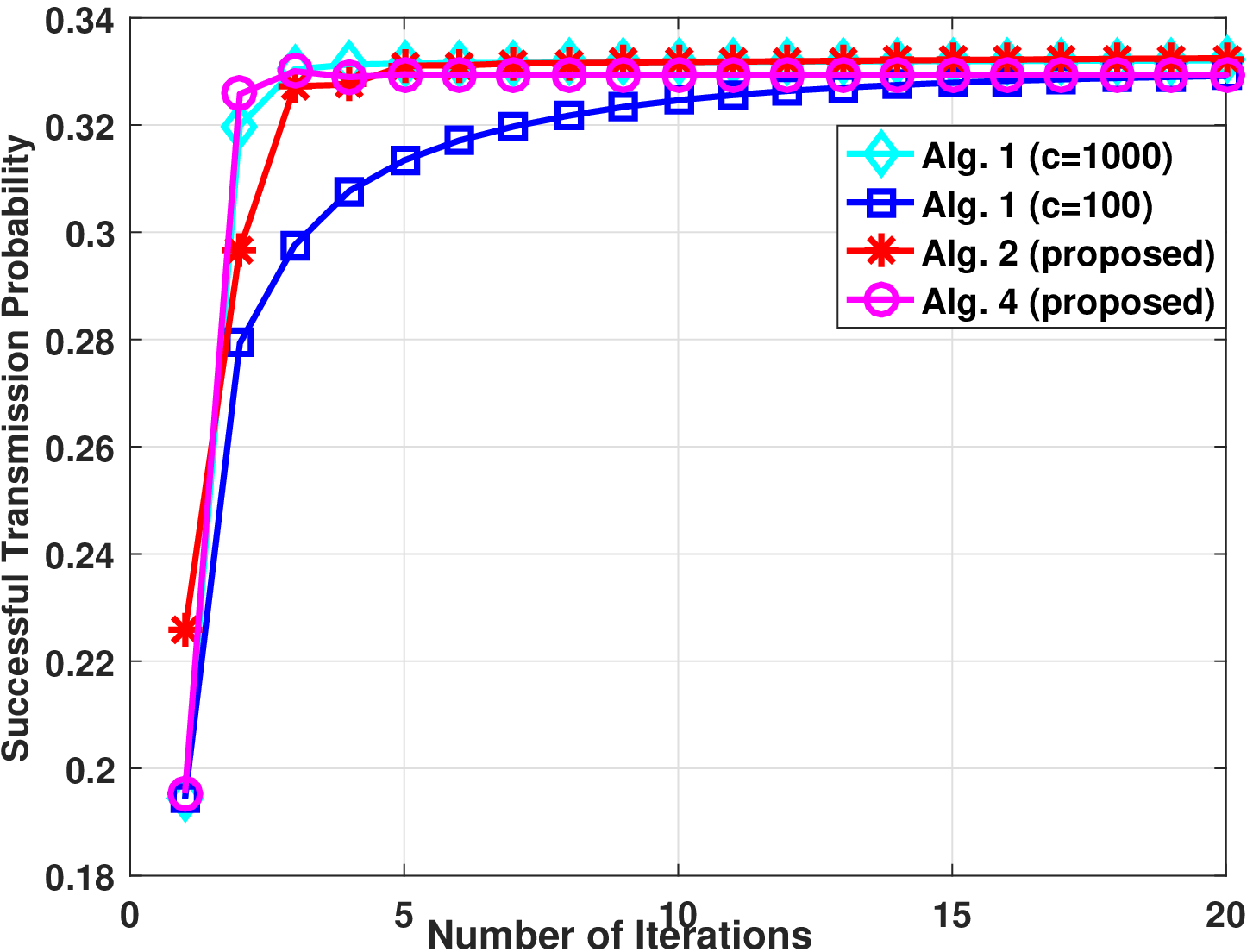}}}
\hspace{4mm}
\subfigure[\small{$K_1=K_2=35$.}]
{\resizebox{7.5cm}{!}{\includegraphics{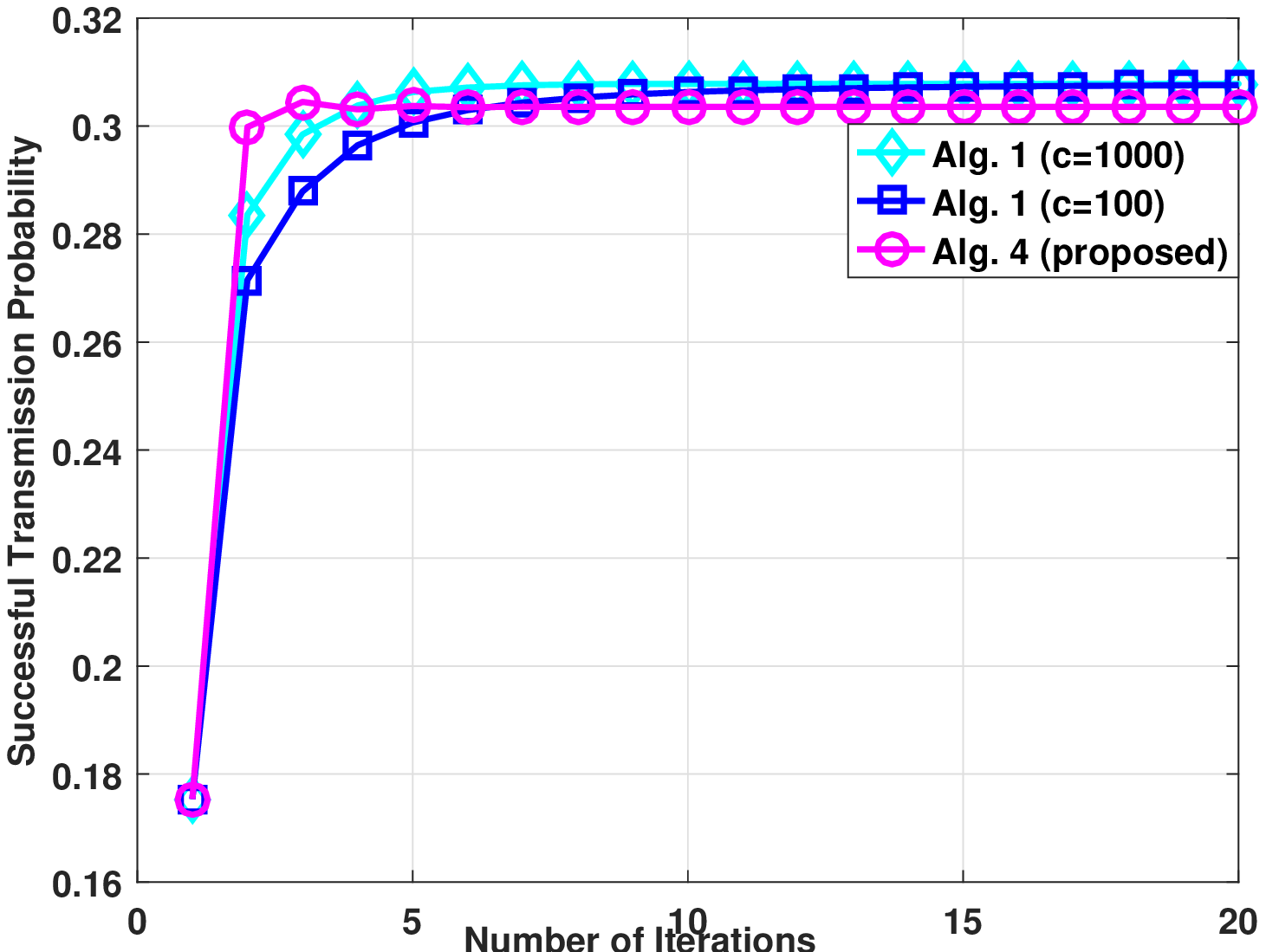}}}
\end{center}
\caption{\small{Successful transmission probability versus the number of iterations. The stepsize for Algorithm~\ref{alg:local-asymp-sym} is $ \epsilon(t)=\frac{c}{2+t^{0.55}}$. We choose the same initial point for all the algorithms shown in Fig.~\ref{fig:simulation-large-convergence}}}
\label{fig:simulation-large-convergence}
\end{figure}

\begin{figure}[t]
\begin{center}
\subfigure[\small{Cache size $K_2$ ($K_1$) at $\gamma=0.55$, $K_1=K_2+20$.}\label{fig:simulation-computaing-time-size}]
{\resizebox{7.7cm}{!}{\includegraphics{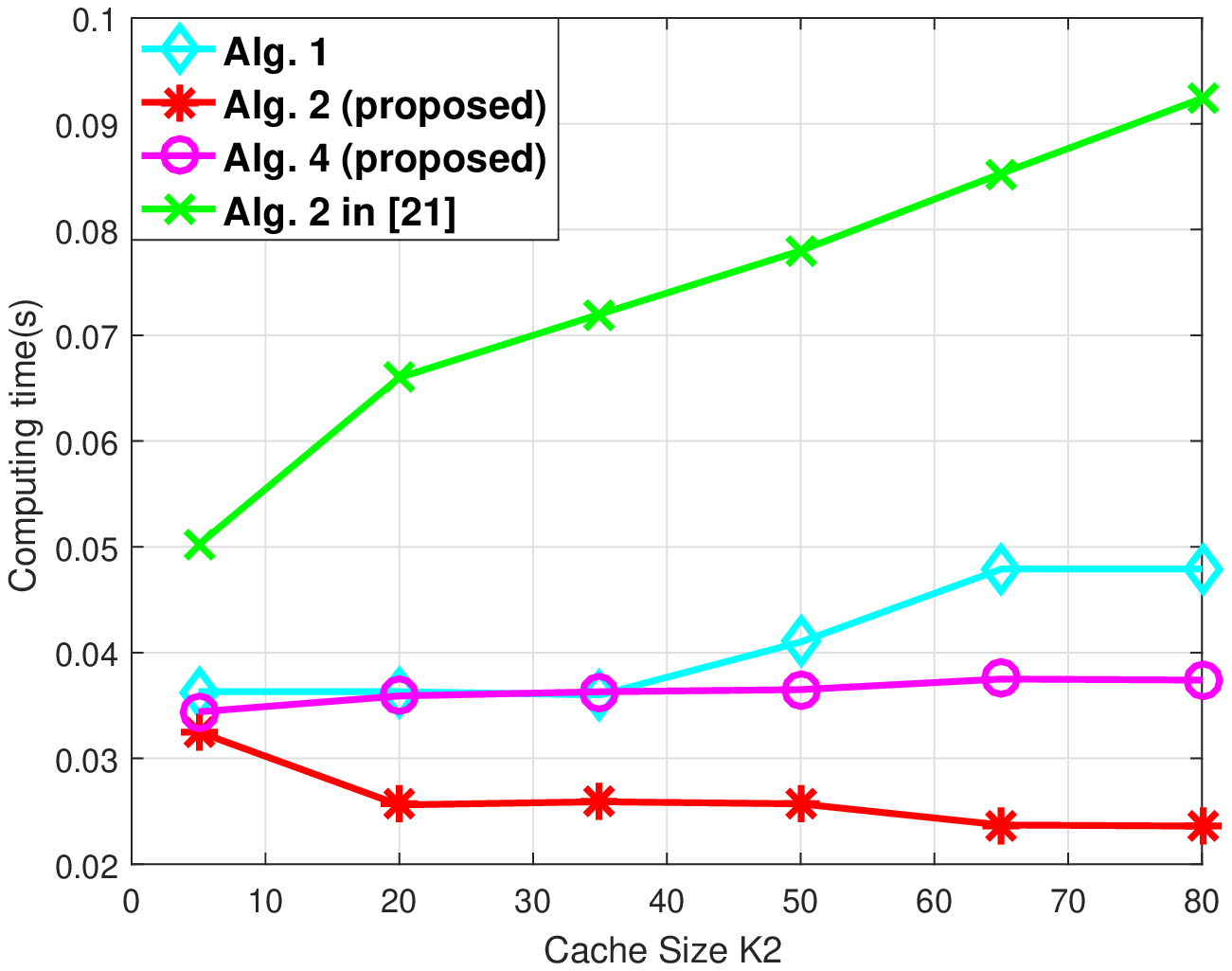}}}
\subfigure[\small{Zipf exponent $\gamma$ at $K_1=55$, $K_2=35$.}\label{fig:simulation-computaing-time-zipf-exponent}]
{\resizebox{7.7cm}{!}{\includegraphics{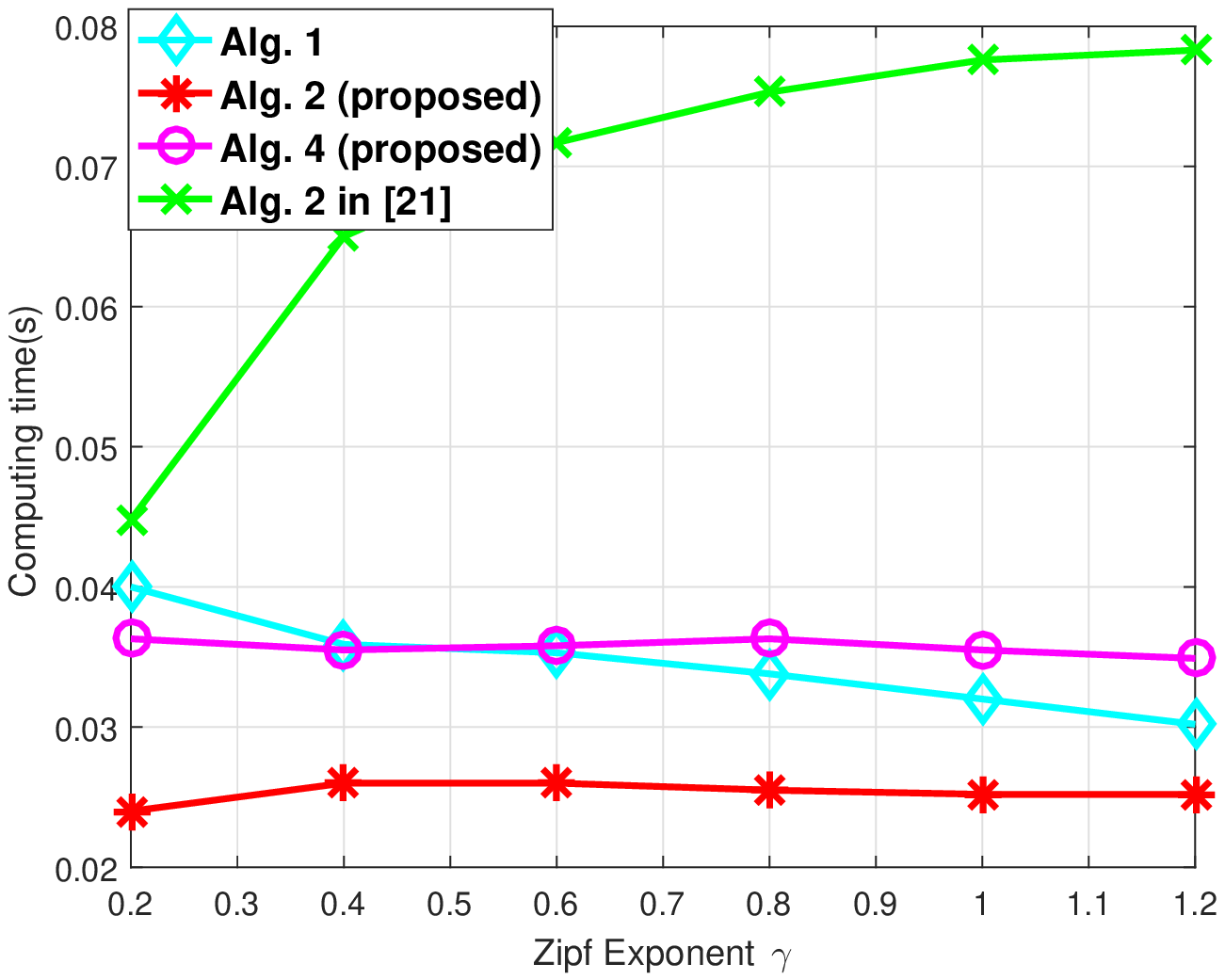}}}
\end{center}
\begin{center}
\subfigure[\small{
Cache size $K_2$ ($K_1$) at $\gamma=0.55$, $K_1=K_2$.}\label{fig:simulation-computaing-time-size-K}]
{\resizebox{7.7cm}{!}{\includegraphics{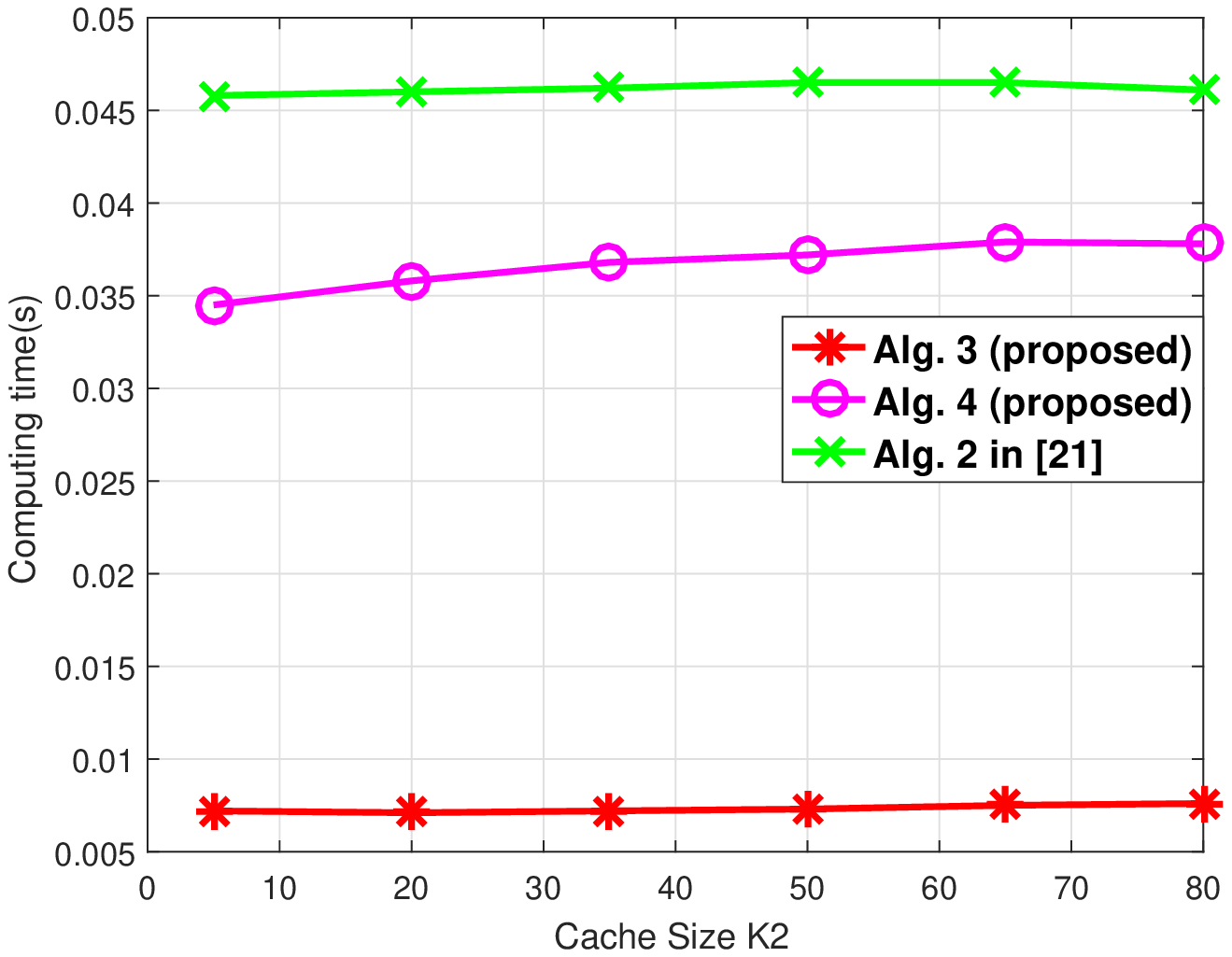}}}
\subfigure[\small{
Zipf exponent $\gamma$ at $K_1=K_2=35$.}\label{fig:simulation-computaing-time-zipf-exponent-K}]
{\resizebox{7.7cm}{!}{\includegraphics{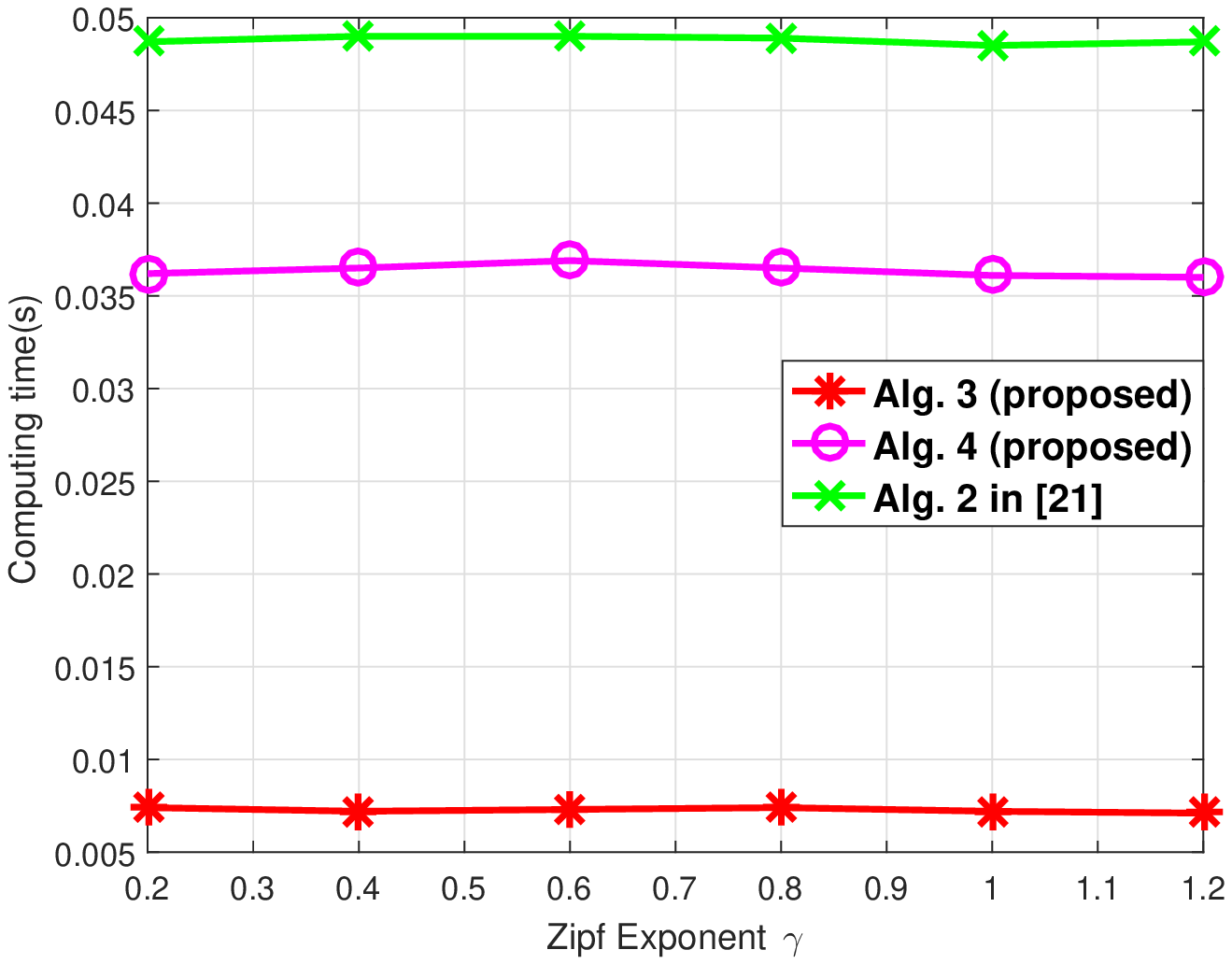}}}
\end{center}
\caption{\small{Computing time versus the cache size or Zipf exponent $\gamma$. The stepsize for Algorithm~\ref{alg:local-asymp-sym} is $ \epsilon(t)=\frac{c}{2+t^{0.55}}$. For Algorithm~\ref{alg:local-asymp-sym}, each point corresponds to the minimum computing time by choosing the optimal parameter $c\in \{500, 1000, 1500, 2000, 2500\}$.}}
\label{fig:simulation-computing-time}
\end{figure}

\subsection{Convergence and Complexity}
In this subsection, we show the convergence and complexity of the proposed algorithms. Fig.~\ref{fig:simulation-large-convergence} illustrates the successful transmission probability versus the number of iterations when $ K_1\neq K_2 $ and $ K_1=K_2 $. From Fig.~\ref{fig:simulation-large-convergence}, we can observe that the rate of convergence of Algorithm~\ref{alg:local-asymp-sym} is strongly dependent on the choices of stepsize $\epsilon(t)$. In addition, Algorithm~\ref{alg:BSUM} and Algorithm~\ref{alg:game-NE} have more robust convergence performance than Algorithm~\ref{alg:local-asymp-sym}, as they do not rely on a stepsize. Fig.~\ref{fig:simulation-computing-time} illustrates the computing time versus the cache size $K_j$ and the Zipf exponent $\gamma$ when $ K_1\neq K_2 $ and $ K_1=K_2 $. From  Fig.~\ref{fig:simulation-computing-time}, we can observe that the computing times of all the algorithms do not change much with $K_j$ or $\gamma$, and the computing times of the proposed algorithms are shorter than that of Algorithm 2 in \cite{cui2017analysis} which is to obtain an asymptotically optimal hybrid caching design. 
These observations demonstrate the advantage of the proposed algorithms in terms of complexity.

\begin{figure}[t]
\begin{center}
\subfigure[\small{
cache size $K_2$ ($K_1$) at $\gamma=0.55$, $K_1=K_2+20$.}\label{fig:simulation-large-size}]
{\resizebox{7.7cm}{!}{\includegraphics{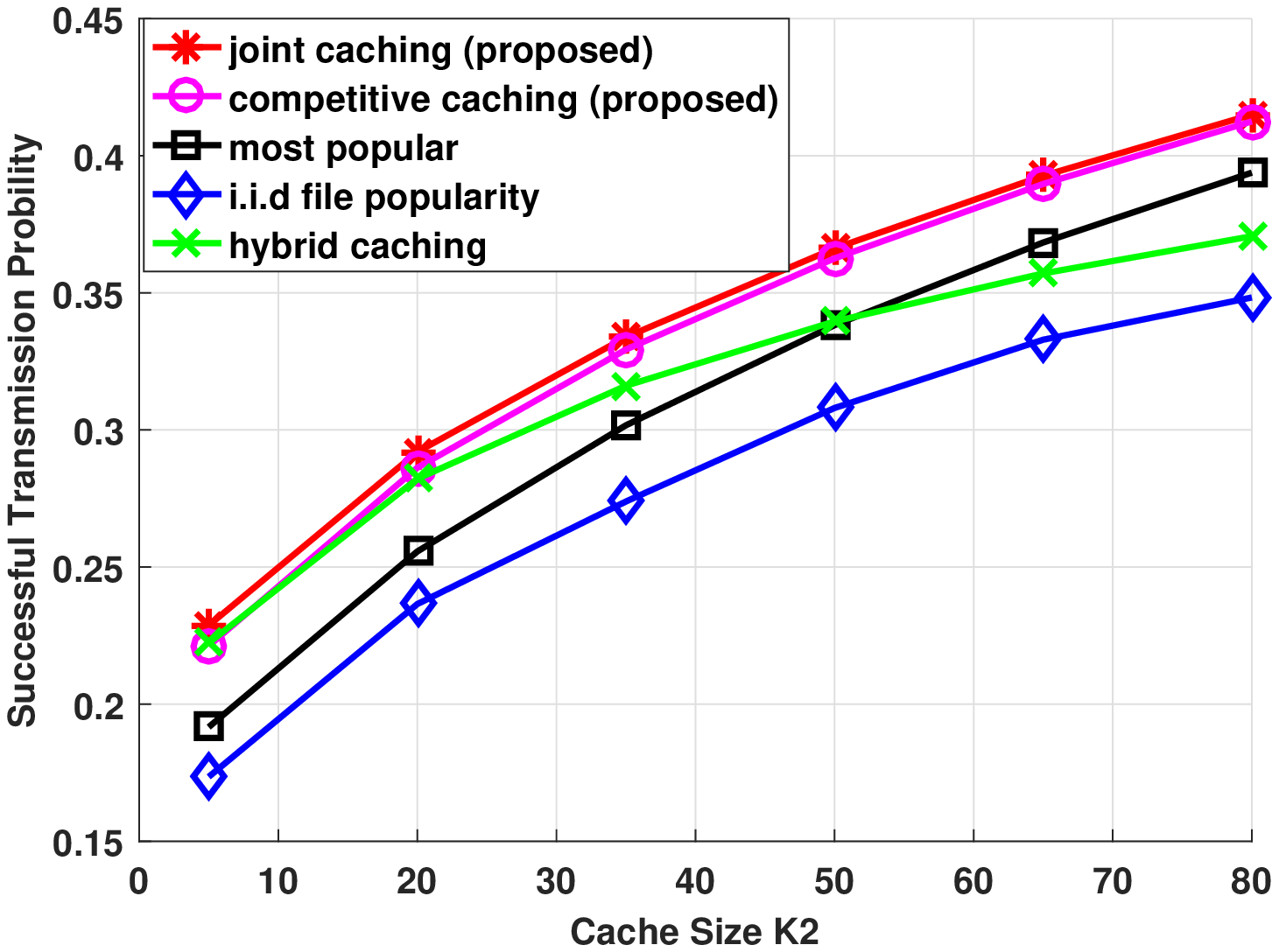}}}
\hspace{2mm}
\subfigure[\small{
Zipf exponent $\gamma$ at $K_1=55$, $K_2=35$.}\label{fig:simulation-large-zipf-exponent}]
{\resizebox{7.7cm}{!}{\includegraphics{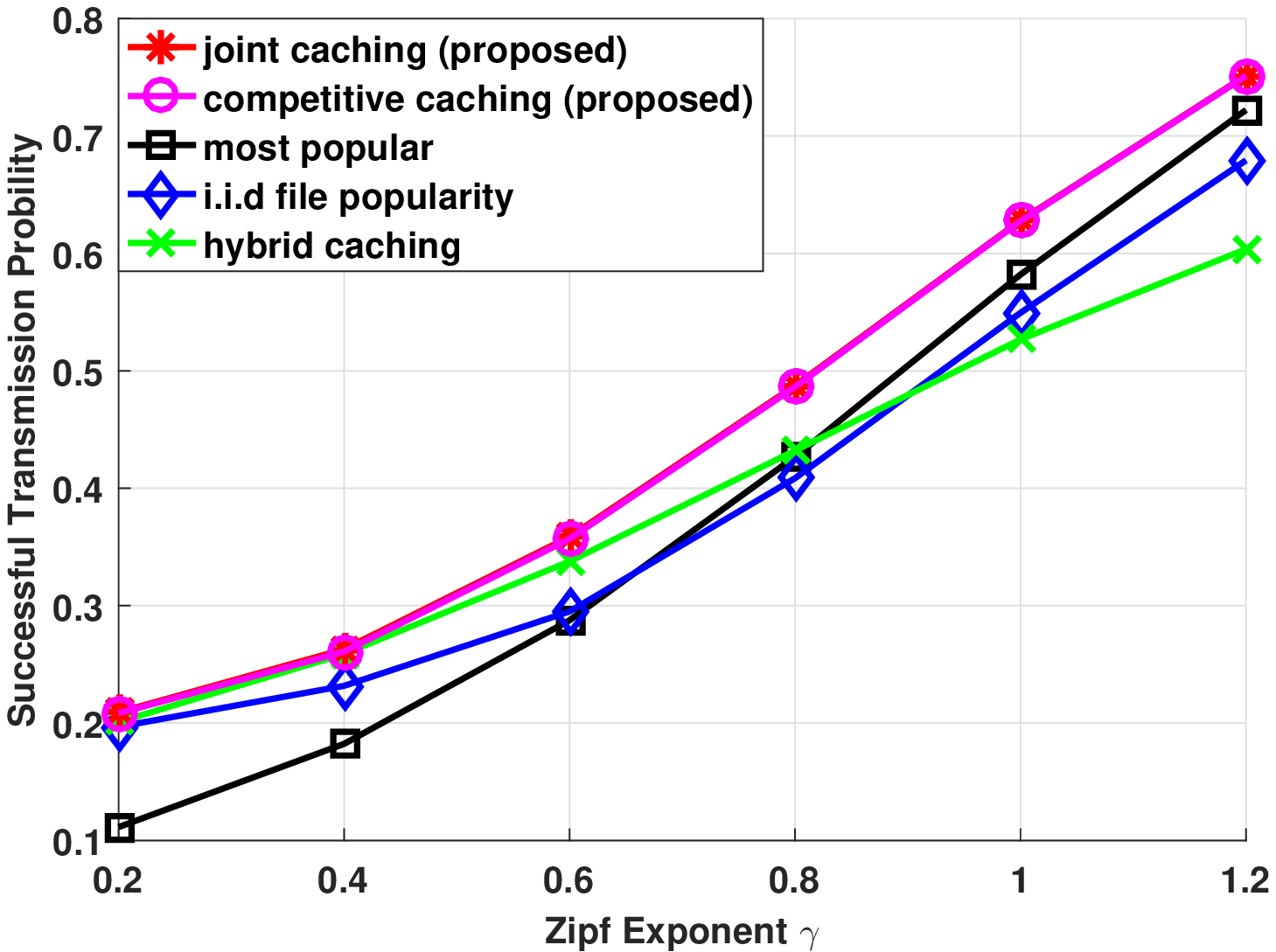}}}
\end{center}
\caption{\small{Successful transmission probability versus the cache size or Zipf exponent $\gamma$.}}
\label{fig:simulation-performance}
\end{figure}

\begin{figure}[t]
\begin{center}
\subfigure[$\lambda_{1}=1.3\times10^{-6}$, $\lambda_{2}=2.2\times10^{-6}$, $K_2=35$, $K_1=55$.
\label{fig:simulation-file-popularity-distribution1}]
{\resizebox{7.9cm}{!}{\includegraphics{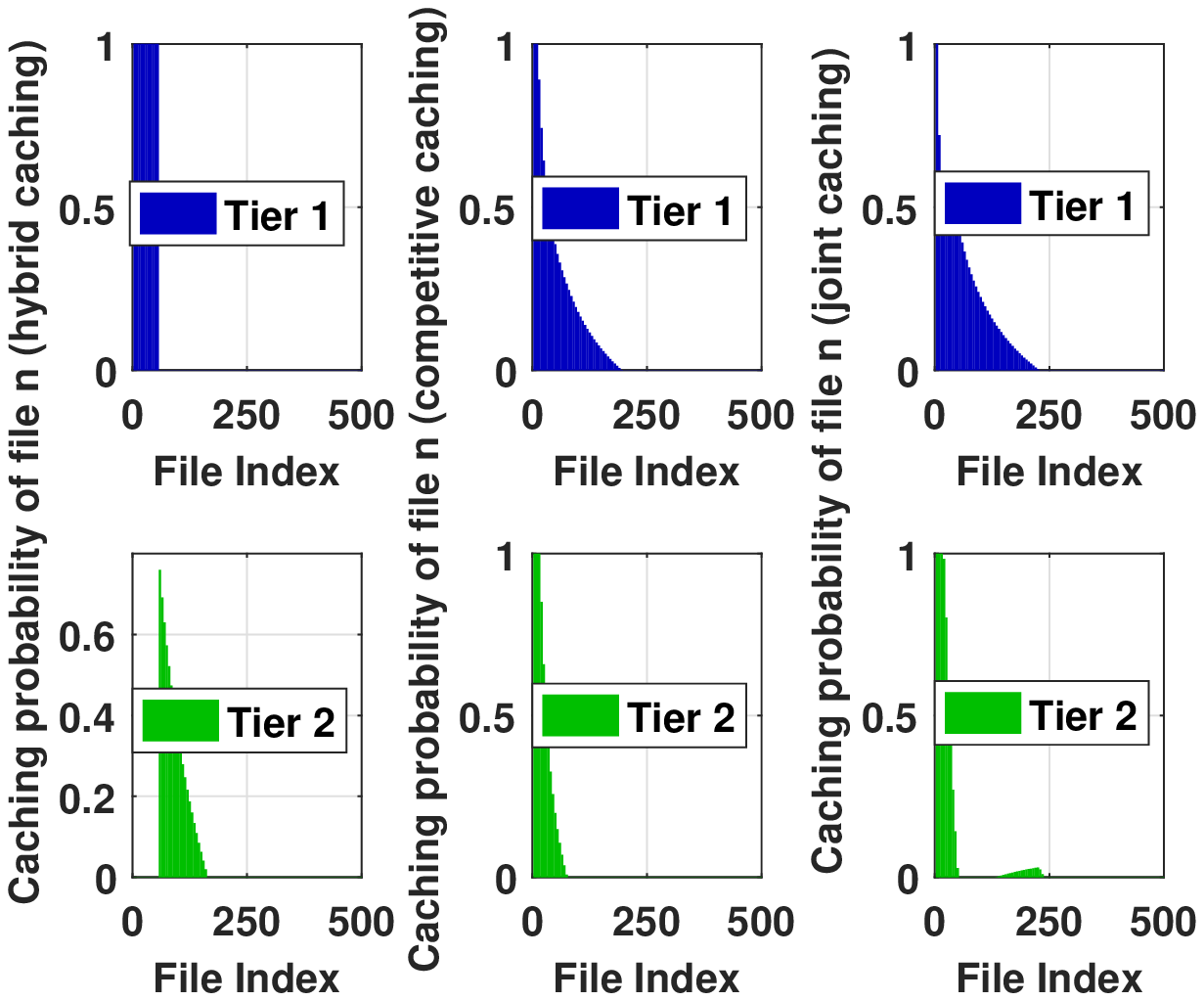}}}
\subfigure[$\lambda_{1}=9\times10^{-9}$, $\lambda_{2}=3.5\times10^{-6}$, $K_2=35$, $K_1=55$.
\label{fig:simulation-file-popularity-distribution2}]
{\resizebox{7.9cm}{!}{\includegraphics{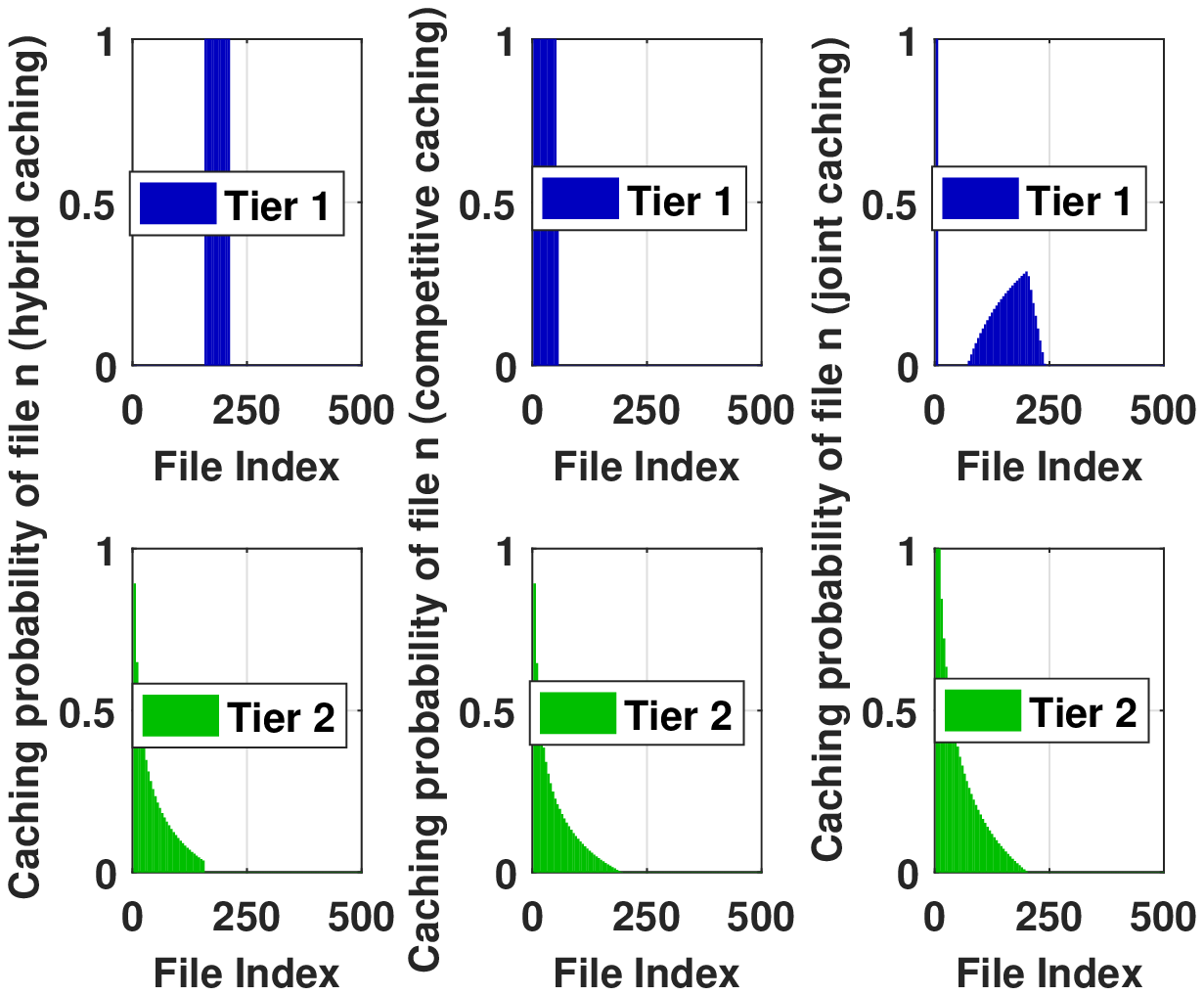}}}
\end{center}
\begin{center}
\subfigure[$\lambda_{1}=1.3\times10^{-6}$, $\lambda_{2}=2.2\times10^{-6}$, $K_1=K_2=35$.\label{fig:simulation-file-popularity-distribution1_K}]
{\resizebox{7.9cm}{!}{\includegraphics{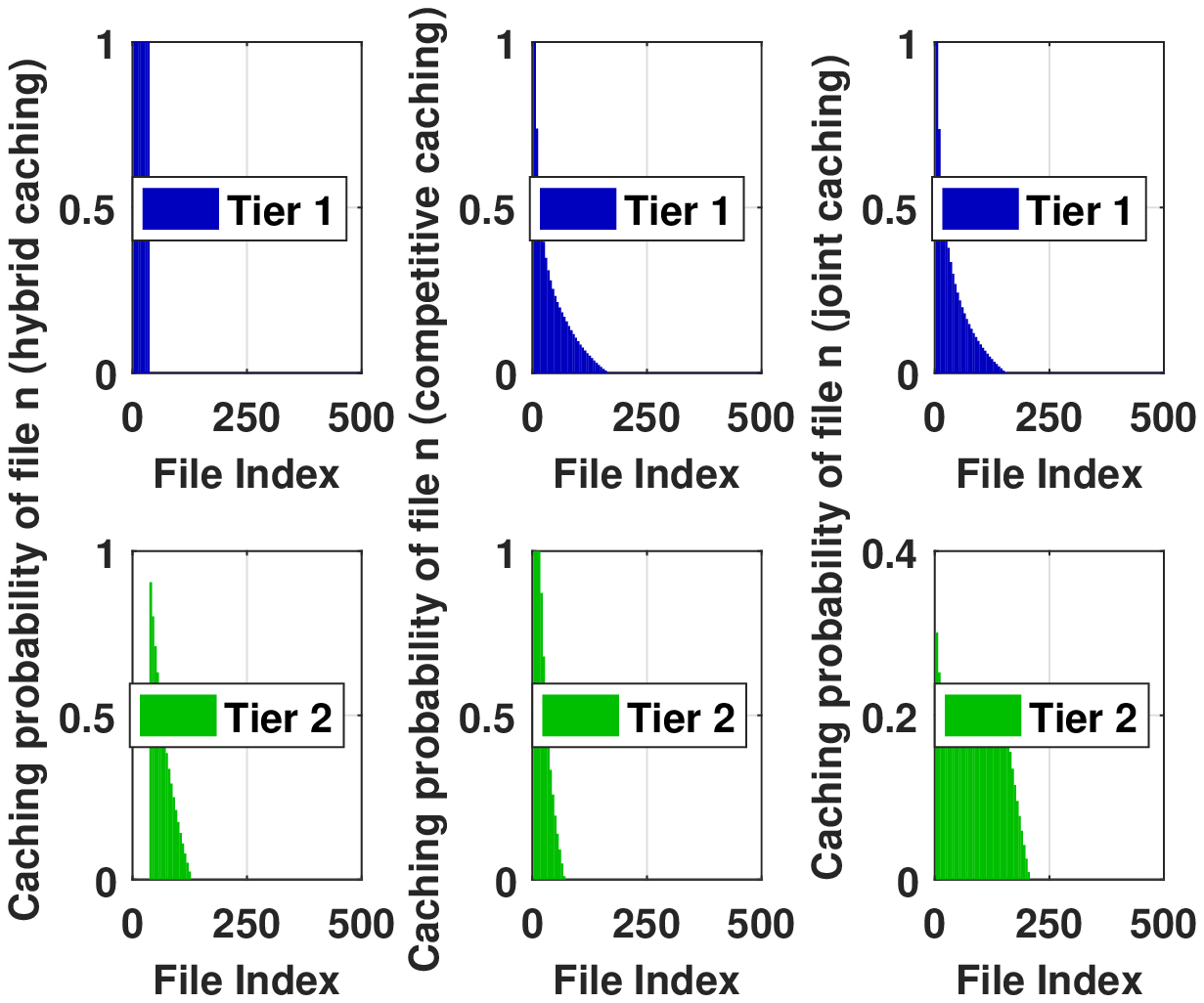}}}
\subfigure[$\lambda_{1}=9\times10^{-9}$, $\lambda_{2}=3.5\times10^{-6}$,$K_1=K_2=35$.
\label{fig:simulation-file-popularity-distribution2_K}]
{\resizebox{7.9cm}{!}{\includegraphics{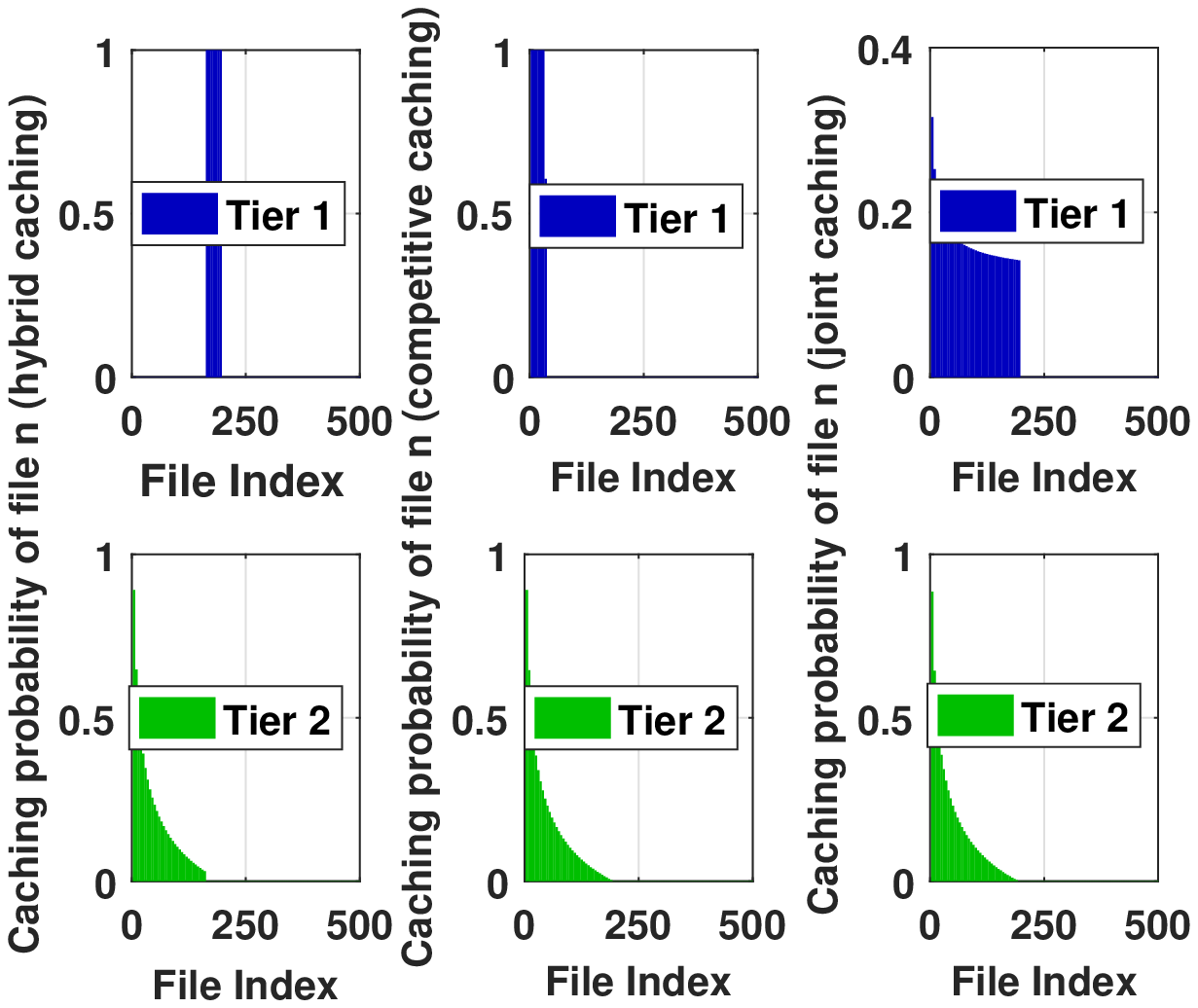}}}
\end{center}
\caption{\small{Caching probabilities for the files in $\{5, 10, \cdots, 500\}$ of joint caching design, competitive caching design and hybrid caching design at $\alpha=4$ and $\gamma=0.55$.}}
\label{fig:simu-file-popularity-distribution}
\end{figure}

\subsection{Successful Transmission Probabilities and Caching Probabilities}
In this subsection, we compare the successful transmission probabilities and caching probabilities of the proposed joint and competitive caching designs with those of three baselines. Baseline 1 (most popular) refers to the design in which each POA in tier $j$ stores the $K_j$ most popular files~\cite{EURASIP15Debbah,LiuYangICC16,Yang16}. Baseline 2 (i.i.d. file popularity) refers to the design in which each POA in tier $j$ randomly stores $K_j$ files, in an i.i.d. manner with file $n$ being selected with probability $a_n$~\cite{nagaraja2015caching}. Baseline 3 (hybrid caching) refers to the hybrid caching design obtained by Algrithm 2 in \cite{cui2017analysis}. The three baseline schemes also adopt the same multicasting scheme as in our design.
Fig.~\ref{fig:simulation-performance} illustrates the successful transmission probability versus the cache size $K_j$ and the Zipf exponent $\gamma$, respectively. From Fig.~\ref{fig:simulation-performance}, we can observe that as $K_j$ and $\gamma$ increase, the successful transmission probability of each scheme increases. We can also observe that the two proposed designs outperform all the three baseline schemes. In addition, we can see that when $K_j$ or $\gamma$ is large, the two proposed designs reduce to the most popular caching design. When $K_j$ or $\gamma$ is small, the two proposed designs perform similarly as the hybrid caching design. These observations show that the two proposed designs can well adapt to the changes of the system parameters and can wisely utilize storage resources.

Fig.~\ref{fig:simu-file-popularity-distribution} illustrates the caching probabilities for the proposed joint caching design, competitive caching design and the hybrid caching design. From Fig.~\ref{fig:simu-file-popularity-distribution}, we observe that under the proposed joint caching design and the hybrid caching design, when $\frac{\lambda_1}{\lambda_2}$ is above (below) some threshold, POAs of tier $ 1 $ (POAs of tier $ 2 $) cache the most popular files as they can offer relatively higher received powers. Recall that under the hybrid caching design, the files stored in the two tiers are non-overlapping, while the proposed joint caching design and competitive caching design allow a file to be stored in the two tiers. By comparing the caching probabilities under the three designs, we can see that the joint caching design and the competitive caching design offer much higher spatial file diversity, leading to higher successful transmission probabilities.

\section{Conclusion}
In this paper, we considered a  random caching  and multicasting scheme in a two-tier large-scale cache-enabled  wireless multicasting network, operated by a single operator or two different operators.
First, we  derived  tractable expressions for  the successful transmission probabilities in the general region and the asymptotic  region, respectively.
Then, we formulated  the optimal joint caching design problem  in the asymptotic region. We develop  an iterative algorithm, which is shown to converge to  a stationary point.
Next, we formulated the competitive caching design game in the asymptotic region, obtained the unique NE of the game and developed an iterative algorithm, which is shown to converge to  the NE under a mild condition.
Finally, by numerical simulations, we showed that the two proposed designs achieve significant gains over  existing schemes, in terms of  successful transmission probability and complexity.

\section*{Appendix A: Proof of Lemma~\ref{Lem:pmf-K}}

Let random variable $Y_{j,m,n,i}\in\{0,1\}$ denote whether file $m\in \mathcal N_{j,i}\setminus \{n\}$ is requested from $\ell_{0}$ when $\ell_{0}$ contains combination $i\in \mathcal I_{j,n}$. When $\ell_{0}$ contains combination $i\in \mathcal I_{j,n}$, we have $K_{j,n,0}=1+\sum_{m\in \mathcal N_{j,i,-n}} Y_{j,m,n,i}$.
For analytical tractability, as in~\cite{cui2016analysis}, assume $ Y_{j,m,n,i} $, $ m\in\mathcal N_{j,i}\setminus \{n\} $ are independent.
By Appendix~C of~\cite{cui2016analysis}, we have
\begin{align}
&\Pr \left[K_{j,n,0}=k\ |\ j_0=j\right]\nonumber\\
\approx&\sum_{i\in \mathcal I_{j,n}}\frac{p_{j,i}}{T_{j,n}}\sum_{ \mathcal X\in  \left\{\mathcal S \subseteq \mathcal N_{j,i,-n}: |\mathcal S|=k-1\right\} }\prod_{m\in \mathcal X}(1-\Pr[Y_{j,m,n,i}=0])\prod_{m\in \mathcal N_{j,i,-n}\setminus\mathcal X}\Pr[Y_{j,m,n,i}=0].\label{eqn:load-pmf-in-proof}
\end{align}
Similar to Appendix~B in~\cite{singh2013offloading}, we have
\begin{align}
\Pr[Y_{j,m,n,i}=0]
\approx\left(1+\frac{a_m\lambda_uA_{j,m}\left(T_{j,m},T_{\overline j,m}\right)}{3.5T_{j,m}\lambda_j}\right)^{-3.5}
=b_{j,m}.\label{eqn:voronoi-cell-prob}
\end{align}
By substituting~\eqref{eqn:voronoi-cell-prob} into~\eqref{eqn:load-pmf-in-proof}, we can prove Lemma~\ref{Lem:pmf-K}.

\section*{Appendix B: Proof of  Corollary~\ref{Cor:asym-perf-v2}}

When $\frac{P_1}{N_0}\to \infty$ and $\frac{P_2}{N_0}\to \infty$, we have $\exp\left(-\left (2^{\frac{k\tau}{W}} - 1\right ) d^{\alpha}\frac{N_{0}}{P_1}\right)\to1$ and\\ $\exp\left(-\left (2^{\frac{k\tau}{W}} - 1 \right ) d^{\alpha}\frac{N_{0}}{P_2}\right )\to1$. When $\lambda_u\to \infty$,  we have $K_{j,n,0} \to K_j$ in distribution. 
Noting that $\int_{0}^{\infty}d\exp\left(-cd^{2}\right){\rm d}d=\frac{1}{2c}$, we can solve integrals in \eqref{eqn:f-j-k}. Thus, we can prove Corollary~\ref{Cor:asym-perf-v2}.


\section*{Appendix C: Proof of Theorem~\ref{Thm:BSUM-cvg}}

We show that  the conditions in  Theorem 2 (a) of \cite{razaviyayn2013unified} hold. 
i) By noting that \eqref{eqn:BSUM-tight} and \eqref{eqn:BSUM-lower-bound}  hold and $ g_j\left(\mathbf T_{j};\mathbf T_1(t),\mathbf T_2(t)\right) $ is continuous and differentiable, we know that Assumption~2 in~\cite{razaviyayn2013unified} is satisfied and $ q_{\infty}\left(\mathbf T_1,\mathbf T_2\right) $ is regular at any point in $ \mathcal T_1\times\mathcal T_2 $\cite{razaviyayn2013unified}.
ii) Since $ g_j\left(\mathbf T_{j};\mathbf T_1(t),\mathbf T_2(t)\right) $ is strictly concave on $ \mathcal T_j $ for any given $ \left(\mathbf T_1(t),\mathbf T_2(t)\right)\in\mathcal T_1\times\mathcal T_2 $, Problem~\ref{prob:BSUM-subprob} has a unique solution for any given $ \left(\mathbf T_1(t),\mathbf T_2(t)\right)\in\mathcal T_1\times\mathcal T_2 $. 
Therefore, by Theorem 2 (a) in\cite{razaviyayn2013unified}, we can prove Theorem~\ref{Thm:BSUM-cvg}. 

\section*{Appendix D: Proof of Lemma~\ref{Lem:NE-exis-uniq}}

First, we use Proposition 20.3 in \cite{osborne1994course} to prove the existence of NE of game $ \mathcal G $. It is obvious that for all $ j=1,2 $,  the set of admissible strategies $ \mathcal T_{j} $ is nonempty, compact and convex, and the utility function $ q_{j,\infty}(\mathbf T_j,\mathbf T_{\overline j}) $ is a  continuous function of $ \left(\mathbf T_1,\mathbf T_{2}\right) $. Since $ q_{j,\infty}(\mathbf T_j,\mathbf T_{\overline j}) $ is strictly concave on $ \mathcal T_{j} $ for any given  $ \mathbf T_{\overline j}\in\mathcal T_{\overline j} $, then  $ q_{j,\infty}(\mathbf T_j,\mathbf T_{\overline j}) $ is  quasi-concave on $ \mathcal T_{j} $ for any given  $ \mathbf T_{\overline j}\in\mathcal T_{\overline j} $. Thus, by Proposition 20.3 in \cite{osborne1994course}, we know that there exists at least one NE of game $ \mathcal G $. 
Next, we prove the uniqueness of NE by Theorem 2 in \cite{rosen1965existence}. By the first-order strict concavity condition, we know that a strictly concave function must be diagonally strictly concave\cite{rosen1965existence}. Thus, by Theorem 2 in \cite{rosen1965existence}, we know that there exists a unique NE of game $ \mathcal G $.

\section*{Appendix E: Proof of Theorem~\ref{Thm:NE-alg-cvg}}

We prove the convergence of Algorithm~\ref{alg:game-NE} by verifying the conditions in Theorem 1 of\cite{li1987distributed}. i) It can be easily seen that for all $ j=1,2 $, $ q_{j,\infty}(\mathbf T_j,\mathbf T_{\overline j}) $ is strictly concave on $ \mathcal T_{j} $ for any given $ \mathbf T_{\overline j}\in\mathcal T_{\overline j} $ and is second-order Fr\'{e}chet differentiable\cite{luenberger1969optimization}.  
ii) By Lemma~\ref{Lem:NE-alg-subprob-opt}, we know that there  exists an optimal solution to Problem~\ref{prob:NE-alg-subprob} for any given $ \mathbf T_{\overline j}(t)\in\mathcal T_{\overline j} $. iii) To guarantee the convergence of Algorithm~\ref{alg:game-NE}, it remains to show that
\begin{align}
\Bigg{\lVert} &(\nabla^2_{\scriptscriptstyle{\mathbf T_j^2}}q_{j,\infty}\left(\mathbf T_j,\mathbf T_{\overline j}\right))^{-1} \nabla^2_{\scriptscriptstyle{\mathbf T_j\mathbf T_{\overline j}}}q_{j,\infty}\left(\mathbf T_j,\mathbf T_{\overline j}\right)\Big{|}_{\scriptscriptstyle{\mathbf T_j = \mathbf T_j(t+2),\mathbf T_{\overline j} = \mathbf T_{\overline j}(t+1)}} \nonumber\\
&\times
(\nabla^2_{\scriptscriptstyle{\mathbf T_{\overline j}^2}}q_{\overline j,\infty}\left(\mathbf T_{\overline j},\mathbf T_j\right))^{-1} \nabla^2_{\scriptscriptstyle{\mathbf T_{\overline j}\mathbf T_j}}q_{\overline j,\infty}\left(\mathbf T_{\overline j},\mathbf T_j\right)\Big{|}_{\scriptscriptstyle{\mathbf T_j=\mathbf T_j(t),\mathbf T_{\overline j}=\mathbf T_{\overline j}(t+1)}} \Bigg{\rVert}_{2} = \zeta < 1, 
\end{align}
holds for any $ \mathbf T_j(t)\in\mathcal T_j $ and $ ((t+1)\text{ mod }2)+1 = \overline j $. 
Here, $ \lVert\,\cdot\,\rVert_{2} $  denotes the spectral norm~\cite{luenberger1969optimization}. 
By Corollary~\ref{Cor:asym-perf-v2}, we have
\begin{align}
\Bigg{\lVert} &(\nabla^2_{\scriptscriptstyle{\mathbf T_j^2}}q_{j,\infty}\left(\mathbf T_j,\mathbf T_{\overline j}\right))^{-1} \nabla^2_{\scriptscriptstyle{\mathbf T_j\mathbf T_{\overline j}}}q_{j,\infty}\left(\mathbf T_j,\mathbf T_{\overline j}\right)\Big{|}_{\scriptscriptstyle{\mathbf T_j = \mathbf T_j(t+2),\mathbf T_{\overline j} = \mathbf T_{\overline j}(t+1)}} \nonumber\\
&\times
(\nabla^2_{\scriptscriptstyle{\mathbf T_{\overline j}^2}}q_{\overline j,\infty}\left(\mathbf T_{\overline j},\mathbf T_j\right))^{-1} \nabla^2_{\scriptscriptstyle{\mathbf T_{\overline j}\mathbf T_j}}q_{\overline j,\infty}\left(\mathbf T_{\overline j},\mathbf T_j\right)\Big{|}_{\scriptscriptstyle{\mathbf T_j=\mathbf T_j(t),\mathbf T_{\overline j}=\mathbf T_{\overline j}(t+1)}} \Bigg{\rVert}_{2} \nonumber\\
\eqla&\Bigg{\lVert} \text{diag}\left (\left (\frac{1}{4} \left(1-\frac{\theta_{1,K_j}T_{j,n}(t+2)}{\theta_{2,j,K_j}T_{\overline j,n}(t+1) + \theta_{3,j,K_j}}\right) \left(1-\frac{\theta_{1,K_{\overline j}}T_{\overline j,n}(t+1)}{\theta_{2,\overline j,K_{\overline j}}T_{j,n}(t) + \theta_{3,\overline j,K_{\overline j}}}\right)\right )_{n\in\mathcal N}\right )\Bigg{\rVert}_{2} \nonumber\\
\eqlb&\max_{n\in\mathcal N}\Bigg{\lvert} \frac{1}{4} \left(1-\frac{\theta_{1,K_j}T_{j,n}(t+2)}{\theta_{2,j,K_j}T_{\overline j,n}(t+1) + \theta_{3,j,K_j}}\right) \left(1-\frac{\theta_{1,K_{\overline j}}T_{\overline j,n}(t+1)}{\theta_{2,\overline j,K_{\overline j}}T_{j,n}(t) + \theta_{3,\overline j,K_{\overline j}}}\right) \Bigg{\rvert} \nonumber\allowdisplaybreaks\\
\stackrel{(c)}{\leq}&\, \frac{1}{4}  \max_{n\in\mathcal N}\Bigg{\lvert}  1-\frac{\theta_{1,K_j}T_{j,n}(t+2)}{\theta_{2,j,K_j}T_{\overline j,n}(t+1) + \theta_{3,j,K_j}} \Bigg{\rvert} \times \max_{n\in\mathcal N}\Bigg{\lvert} 1-\frac{\theta_{1,K_{\overline j}}T_{\overline j,n}(t+1)}{\theta_{2,\overline j,K_{\overline j}}T_{j,n}(t) + \theta_{3,\overline j,K_{\overline j}}} \Bigg{\rvert} \nonumber\\
\stackrel{(d)}{\leq}&\, \frac{1}{4}  \max\left\{1,\ \Bigg{\lvert}{1-\frac{\theta_{1,K_j}}{\theta_{3,j,K_j}}}\Bigg{\lvert}\right\} \max\left\{1,\ \Bigg{\lvert}{1-\frac{\theta_{1,K_{\overline j}}}{\theta_{3,\overline j,K_{\overline j}}}}\Bigg{\lvert}\right\} < 1,
\end{align}
where $ (a) $ is obtained by the definition of second-order derivative, $ (b) $ is obtained by the definition of spectral norm, $ (c) $ is obtained based on the formula $ \max\big{\lvert}{x_ny_n}\big{\rvert} \leq \max\big{\lvert}{x_n}\big{\rvert}\cdot\max\big{\lvert}{y_n}\big{\rvert} $, $ n\in\mathcal N $ and $ (d) $ is obtained due to
\begin{align}
1-\frac{\theta_{1,K_j}}{\theta_{3,j,K_j}} &\leq 1-\frac{\theta_{1,K_j}T_{j,n}(t+2)}{\theta_{2,j,K_j}T_{\overline j,n}(t+1) + \theta_{3,j,K_j}} \leq 1,\\
1-\frac{\theta_{1,K_{\overline j}}}{\theta_{3,\overline j,K_{\overline j}}} &\leq 1-\frac{\theta_{1,K_{\overline j}}T_{\overline j,n}(t+1)}{\theta_{2,\overline j,K_{\overline j}}T_{j,n}(t) + \theta_{3,\overline j,K_{\overline j}}} \leq 1. 
\end{align}
Therefore, by Theorem 1 of \cite{razaviyayn2013unified}, we can prove Theorem~\ref{Thm:NE-alg-cvg}.

%
%


\end{document}